\theoremstyle{plain}
\newtheorem{theorem}{Theorem}
\newtheorem{lemma}{Lemma}
\theoremstyle{definition}
\newcommand{\etal}{\emph{et al.}\xspace}
\newcommand{\graph}{half-\ensuremath{\Theta_6}-graph\xspace}
\newcommand{\canon}[2]{\ensuremath{T_{#1 #2}}}
\newcommand{\Vis}{\mathord{\textrm{Vis}}}
\title{Competitive Local Routing with Constraints
  \thanks{Research supported by NSERC, the Ontario Ministry of Research and Innovation, Carleton University's President's 2010 Doctoral Fellowship, the Carleton-Fields
Postdoctoral Award, the Danish Council for Independent Research, Natural Sciences, grant DFF-1323-00247, and JST ERATO Grant Number JPMJER1305, Japan.}
  \thanks{An extended abstract containing some of the results in this paper appeared in the 26th International Symposium on Algorithms and Computation (ISAAC 2015)~\cite{BFRV2015Routing}.}}
\author{%
  Prosenjit~Bose,%
    \thanks{School of Computer Science, Carleton University, Ottawa, Canada, 
    \texttt{jit@scs.carleton.ca}, \texttt{sander@cg.scs.carleton.ca}}\, 
  Rolf~Fagerberg,%
    \thanks{Department of Mathematics and Computer Science, University of Southern Denmark, Odense, Denmark, 
    \texttt{rolf@imada.sdu.dk}}\, 
  Andr\'e~van~Renssen,%
    \thanks{National Institute of Informatics, Tokyo, Japan, 
    \texttt{andre@nii.ac.jp}}\, 
    \thanks{JST, ERATO, Kawarabayashi Large Graph Project}\, 
    and Sander~Verdonschot\footnotemark[3]%
}
\date{}
\begin{document}

\maketitle

\begin{abstract}
Let $P$ be a set of $n$ vertices in the plane and $S$ a set of non-crossing line segments between vertices in $P$, called constraints. Two vertices are visible if the straight line segment connecting them does not properly intersect any constraints. The constrained $\Theta_m$-graph is constructed by partitioning the plane around each vertex into $m$ disjoint cones, each with aperture $\theta = 2 \pi/m$, and adding an edge to the `closest' visible vertex in each cone. We consider how to route on the constrained $\Theta_6$-graph. We first show that no deterministic 1-local routing algorithm is $o(\sqrt{n})$-competitive on all pairs of vertices of the constrained $\Theta_6$-graph. After that, we show how to route between any two visible vertices of the constrained $\Theta_6$-graph using only 1-local information. Our routing algorithm guarantees that the returned path is 2-competitive. 
Additionally, we provide a 1-local 18-competitive routing algorithm for visible vertices in the constrained half-$\Theta_6$-graph, a subgraph of the constrained $\Theta_6$-graph that is equivalent to the Delaunay graph where the empty region is an equilateral triangle. To the best of our knowledge, these are the first local routing algorithms in the constrained setting with guarantees on the length of the returned path.
\end{abstract}

\section{Introduction}
A fundamental problem in any graph is the question of how to route a message from one vertex to another. What makes this more challenging is that often in a network the routing strategy must be \emph{local}. Informally, a routing strategy is \emph{local} when the routing algorithm must decide which vertex to forward a message to based solely on knowledge of the source and destination vertex, the current vertex and all vertices directly connected to the current vertex. Routing algorithms are considered \emph{geometric} when the graph that is routed on is embedded in the plane, with edges being straight line segments connecting pairs of vertices and weighted by the Euclidean distance between their endpoints. Geometric routing algorithms are important in wireless sensor networks \mbox{(see \cite{G09} and \cite{R09}} for surveys of the area) since they offer routing strategies that use the coordinates of the vertices to guide the search, instead of the more traditional routing tables.  

Most of the research has focused on the situation where the network is constructed by taking a subgraph of the complete Euclidean graph, i.e. the graph that contains an edge between every pair of vertices and the length of this edge is the Euclidean distance between the two vertices. We study this problem in a more general setting with the introduction of line segment {\em constraints}. Specifically, let $P$ be a set of vertices in the plane and let $S$ be a set of line segments between vertices in $P$, with no two line segments intersecting properly. The line segments of $S$ are called \emph{constraints}. Two vertices $u$ and $v$ can \textit{see each other} if and only if either the line segment $uv$ does not properly intersect any constraint or $u v$ is itself a constraint. If two vertices $u$ and $v$ can see each other, the line segment $uv$ is a \emph{visibility edge}. The \emph{visibility graph} of $P$ with respect to a set of constraints $S$, denoted $\Vis(P,S)$, has $P$ as vertex set and all visibility edges as edge set. In other words, it is the complete graph on $P$ minus all non-constraint edges that properly intersect one or more constraints in~$S$.

This natural extension allows for more realistic network modeling by excluding edges that cannot be used, such as ones crossing mountain ranges or areas of high interference which would scramble the message if used. As such, this setting has been studied extensively within the context of motion planning amid obstacles. Clarkson \cite{C87} was one of the first who studied this problem and showed how to construct a $(1+\epsilon)$-spanner of $\Vis(P,S)$ with a linear number of edges. A subgraph $H$ of $G$ is called a $t$-spanner of $G$ (for $t\geq 1$) if for each pair of vertices $u$ and $v$, the shortest path in $H$ between $u$ and $v$ has length at most $t$ times the shortest path in $G$ between $u$ and $v$. The smallest value $t$ for which $H$ is a $t$-spanner is the \emph{spanning ratio} or \emph{stretch factor} of $H$. Following Clarkson's result, Das \cite{D97} showed how to construct a spanner of $\Vis(P,S)$ with constant spanning ratio and constant degree. Bose and Keil \cite{BK06} showed that the Constrained Delaunay Triangulation (which contains an edge between two visible vertices $u$ and $v$ if and only if $u v$ is a constraint or there exists a circle with $u$ and $v$ on its boundary that contains no vertices visible to $u$ and $v$ in its interior) is a 2.42-spanner of $\Vis(P,S)$. Recently, the constrained \graph (which is identical to the constrained Delaunay graph whose empty visible region is an equilateral triangle, a formal definition follows in Section~\ref{sec:Preliminaries}) was shown to be a plane 2-spanner of $\Vis(P,S)$~\cite{BFRV12Constrained} and all constrained $\Theta$-graphs with at least 6 cones were shown to be spanners as well~\cite{BR14}. 

However, though it is known that these graphs contain short paths, it is not known how to route in a local fashion. In other words, other than by running some global shortest path algorithm or flooding the network with messages, the vertices are still unable to communicate with each other. To address this issue, we look at $k$-local routing algorithms in the constrained setting, i.e. routing algorithms that must decide which vertex to forward a message to based solely on knowledge of the source and destination vertex, the current vertex and all vertices that can be reached from the current vertex by following at most $k$ edges. Furthermore, we require our algorithms to be \emph{competitive}, i.e. the length of the returned path needs to be related to the length of the shortest path in the graph. 

In the unconstrained setting, there exists a 1-local 0-memory routing algorithm that is 2-competitive on the $\Theta_6$-graph and $5/\sqrt{3}$-competitive on the \graph (the $\Theta_6$-graph consists of the union of two \graph{s})~\cite{BFRV2015RoutingJournal}. In the same paper, the authors also show that these ratios are the best possible, i.e. there are matching lower bounds. 

In this paper, we show that the situation in the constrained setting is quite different: no deterministic 1-local routing algorithm is $o(\sqrt{n})$-competitive on all pairs of vertices of the constrained $\Theta_6$-graph, regardless of the amount of memory (defined in Section~\ref{sec:Preliminaries}) it is allowed to use. This shows that routing in the constrained setting is considerably harder than in the unconstrained setting. 

Despite this lower bound, we describe a 1-local 0-memory routing algorithm between any two \emph{visible} vertices of the constrained $\Theta_6$-graph that guarantees that the length of the path traveled is at most 2 times the Euclidean distance between the source and destination. Additionally, we provide a 1-local $O(1)$-memory 18-competitive routing algorithm between any two visible vertices in the constrained \graph. To the best of our knowledge, these are the first local routing algorithms in the constrained setting with guarantees on the path length.

\section{Preliminaries}
\label{sec:Preliminaries}
We define a \emph{cone} $C$ to be the region in the plane between two rays originating from a single vertex. This vertex is referred to as the apex of the cone. We let six rays originate from each vertex, with angles to the positive $x$-axis being multiples of $\pi / 3$ (see Figure~\ref{fig:Cones}). Each pair of consecutive rays defines a cone. We write $C_i^u$ to indicate the $i$-th cone of a vertex $u$, or $C_i$ if the apex is clear from the context. For ease of exposition, we only consider point sets in general position: no two vertices define a line parallel to one of the rays that define the cones and no three vertices are collinear. 

\begin{figure}[ht]
  \begin{minipage}[t]{0.46\linewidth}
    \begin{center}
      \includegraphics{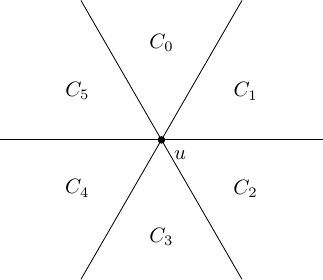}
    \end{center}
    \caption{The cones having apex $u$ in the $\Theta_6$-graph.}
    \label{fig:Cones}
  \end{minipage}
  \hspace{0.03\linewidth}
  \begin{minipage}[t]{0.46\linewidth}
    \begin{center}
      \includegraphics{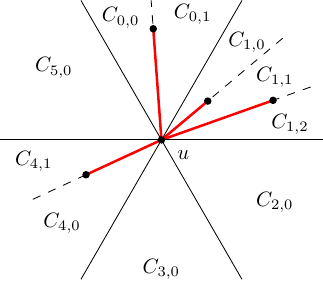}
    \end{center}
    \caption{The subcones having apex $u$ in the constrained $\Theta_6$-graph. Constraints are shown as thick red line segments.}
    \label{fig:ConstrainedCones}
  \end{minipage}
\end{figure}

Let vertex $u$ be an endpoint of a constraint and let the other endpoint lie in cone $C_i^u$. The lines through all such constraints split $C_i^u$ into several \emph{subcones} (see Figure~\ref{fig:ConstrainedCones}). We use $C_{i, j}^u$ to denote the $j$-th subcone, in clockwise order, of $C_i^u$. When a constraint $c = (u, v)$ splits a cone of $u$ into two subcones, we define $v$ to lie in both of these subcones. We consider a cone that is not split to be a single subcone.

The constrained $\Theta_6$-graph is constructed as follows: for each subcone $C_{i, j}$ of each vertex~$u$, add an edge from~$u$ to the closest visible vertex in that subcone, where distance is measured along the bisector of the original cone, not the subcone (see Figure~\ref{fig:Projection}). More formally, we add an edge between two vertices $u$ and $v$ if $v$ can see $u$, $v \in C_{i, j}$, and for all vertices $w \in C_{i, j}$ that can see $u$, $|u v'| \leq |u w'|$, where $v'$ and $w'$ denote the orthogonal projection of $v$ and $w$ on the bisector of $C_i$. Note that our general position assumptions imply that each vertex adds at most one edge per subcone to the graph. 

\begin{figure}[ht]
  \begin{center}
    \includegraphics{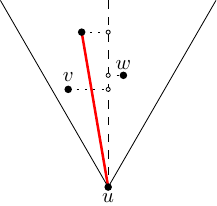}
  \end{center}
  \caption{Three vertices are projected onto the bisector of a cone of $u$. Vertex $v$ is the closest vertex in the left subcone and $w$ is the closest vertex in the right subcone.} 
  \label{fig:Projection}
\end{figure}

Next, we define the constrained \graph. This is a generalized version of the \graph as described by Bonichon \etal\cite{BGHI10}. The constrained \graph is similar to the constrained $\Theta_6$-graph with one major difference: edges are only added in every second cone. More formally, its cones are categorized as positive and negative. Let $(C_0, \overline{C}_2, C_1, \overline{C}_0, C_2, \overline{C}_1)$ be the sequence of cones in counterclockwise order starting from the positive $y$-axis (see Figure~\ref{fig:ConesHalfGraph}). The cones $C_0$, $C_1$, and $C_2$ are called \emph{positive} cones and $\overline{C}_0$, $\overline{C}_1$, and $\overline{C}_2$ are called \emph{negative} cones. We add edges only in the positive cones (and their subcones). Note that by using addition and subtraction modulo 3 on the indices, the positive cone $C_i$ has negative cone $\overline{C}_{i+1}$ as clockwise next cone and negative cone $\overline{C}_{i-1}$ as counterclockwise next cone. A similar statement holds for negative cones. We use $C^u_i$ and $\overline{C}^u_i$ to denote cones $C_i$ and $\overline{C}_i$ with apex $u$. For any two vertices $u$ and $v$, we have $v \in C^u_i$ if and only if $u \in \overline{C}^v_i$ (see Figure~\ref{fig:ConesHalfGraph}). Analogous to the subcones defined for the $\Theta_6$-graph, constraints can split cones into subcones. We call a subcone of a positive cone a positive subcone and a subcone of a negative cone a negative subcone (see Figure~\ref{fig:ConstrainedConesHalfGraph}). We look at the undirected version of these graphs, i.e. when an edge is added, both vertices are allowed to use it. This is consistent with previous work on $\Theta$-graphs. 

\begin{figure}[h!]
  \begin{minipage}[t]{0.46\linewidth}
    \begin{center}
      \includegraphics{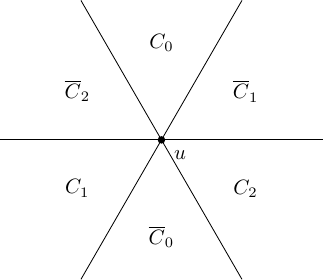}
    \end{center}
    \caption{The cones having apex $u$ in the \graph.}
    \label{fig:ConesHalfGraph}
  \end{minipage}
  \hspace{0.05\linewidth}
  \begin{minipage}[t]{0.46\linewidth}
    \begin{center}
      \includegraphics{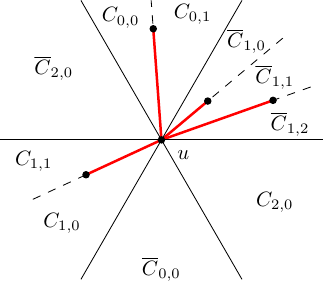}
    \end{center}
    \caption{The subcones having apex $u$ in the constrained \graph. Constraints are shown as thick red line segments.}
    \label{fig:ConstrainedConesHalfGraph}
  \end{minipage}
\end{figure}

Given a vertex $w$ in a positive cone $C^u_i$ of vertex $u$, we define the \emph{canonical triangle} \canon{u}{w} to be the triangle defined by the borders of $C^u_i$ (not the borders of the subcone of $u$ that contains $w$) and the line through $w$ perpendicular to the bisector of $C^u_i$ (see Fig.~\ref{fig:CanonicalTriangle}). Note that for each pair of vertices there exists a unique canonical triangle. 

\begin{figure}[ht]
  \begin{center}
    \includegraphics{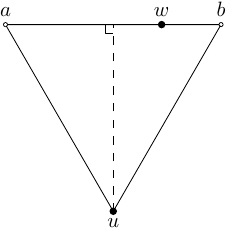}
  \end{center}
  \caption{The canonical triangle \canon{u}{w}.}
  \label{fig:CanonicalTriangle}
\end{figure}

Next, we define our routing model. A deterministic routing algorithm is $k$-local and uses $m$-memory, if the vertex to which a message is forwarded from the current vertex $u$ is a function of $s$, $t$, $N_k(u)$, and $M$, where $s$ and $t$ are the source and destination vertex, $N_k(u)$ is the $k$-neighborhood of $u$ and $M$ is a memory of size $m$, stored with the message. The $k$-neighborhood of a vertex $u$ is the set of vertices in the graph that can be reached from $u$ by following at most $k$ edges. For our purposes, we consider a unit of memory to consist of $\log_2 n$ bits or a point in $\mathbb{R}^2$. Our model also assumes that the only information stored at each vertex of the graph is $N_k(u)$. Since our graphs are geometric, we identify each vertex by its coordinates in the plane. Unless otherwise noted, all routing algorithms we consider in this paper are deterministic $0$-memory algorithms.

There are essentially two notions of \emph{competitiveness} of a routing algorithm on a subgraph of the visibility graph. One is to look at the Euclidean shortest path between the two vertices, i.e. the shortest path in the visibility graph, and the other is to compare the routing path to the shortest path in the subgraph. A routing algorithm is {\em $c$-competitive with respect to the Euclidean shortest path (resp. shortest path in the subgraph)} provided that the total distance traveled by the message is not more than $c$ times the Euclidean shortest path length (resp. shortest path length) between source and destination. The \emph{routing ratio} of an algorithm is the smallest $c$ for which it is $c$-competitive. 

Since the shortest path in the subgraph between two vertices is at least as long as the Euclidean shortest path between them, an algorithm that is $c$-competitive with respect to the Euclidean shortest path is also $c$-competitive with respect to the shortest path in the subgraph. We use competitiveness with respect to the Euclidean shortest path when proving upper bounds and with respect to the shortest path in the subgraph when proving lower bounds.

Furthermore, we want to be able to talk about points at intersections of lines, thus we distinguish between \emph{vertices} and \emph{points}. A \emph{point} is any point in $\mathbb{R}^2$, while a \emph{vertex} is part of the input.

\section{Lower Bound on Local Routing}
We modify the proof by Bose~\etal~\cite{BBCDFLMM00} (that shows that no deterministic routing algorithm is $o(\sqrt{n})$-competitive for all triangulations) to show the following lower bound. 

\begin{theorem}
\label{theo:LowerBound}
  No deterministic 1-local routing algorithm is $o(\sqrt{n})$-competitive with respect to the shortest path on all pairs of vertices of the $\Theta_6$-graph of size $n$, regardless of the amount of memory it is allowed to use. 
\end{theorem}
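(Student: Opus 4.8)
The plan is to adapt the classical lower-bound construction of Bose \etal~\cite{BBCDFLMM00} for triangulations to the constrained $\Theta_6$-graph (here stated, slightly confusingly, as ``the $\Theta_6$-graph'' in the presence of constraints). The idea is to build, for each $n$, a point set $P$ together with a set $S$ of non-crossing constraints so that the resulting constrained $\Theta_6$-graph forces any deterministic $1$-local algorithm to travel $\Omega(\sqrt{n})$ times the graph distance between a designated source $s$ and destination $t$, which will themselves be connected by a path of length $O(\sqrt{n})$ (or $O(1)$ after scaling), so that the competitive ratio is $\Omega(\sqrt{n})$.

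First I would set up the ``comb'' or ``funnel'' gadget: a collection of $\Theta(\sqrt{n})$ long parallel corridors, each built from $\Theta(\sqrt{n})$ vertices, sharing a common entry region near $s$. The short $s$--$t$ path threads down exactly one of these corridors, but from the local view at $s$ (and at every branching vertex) the corridors are indistinguishable: the $k$-neighbourhood, with $k=1$, that the algorithm sees is identical no matter which corridor is the ``good'' one. I would use the constraints in $S$ precisely to sculpt the constrained $\Theta_6$-graph so that (i) each corridor really is a simple path of the right length, (ii) the corridors are locally symmetric at each decision point, and (iii) entering a wrong corridor forces the algorithm to walk all the way to a dead end and back (or around), paying $\Omega(\sqrt{n})$ extra length. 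An adversary argument then fixes, after seeing the deterministic algorithm's choices, which corridor is the real one: since there are $\Theta(\sqrt{n})$ choices and each wrong guess costs $\Omega(\sqrt{n})$ in the worst case — more carefully, the adversary can always make the algorithm explore a constant fraction of the corridors before the right one — the total travelled distance is $\Omega(\sqrt{n})$ while $d_G(s,t)=O(\sqrt{n}) $ up to scaling $O(1)$, giving the claimed bound. The ``regardless of memory'' clause is automatic in this style of argument: memory cannot help because the first visit to any decision vertex carries no information distinguishing the corridors, so a deterministic algorithm commits to an exploration order that the adversary defeats.

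The key steps, in order, are: (1) describe the vertex placement and constraint set realizing the comb, and verify the general-position assumptions can be met (or perturbed to); (2) prove that the constrained $\Theta_6$-graph on this instance is exactly the graph claimed — this requires checking, subcone by subcone, which closest-visible-vertex edges get added, using the definition of subcones induced by constraints and the bisector-distance rule; (3) establish the local indistinguishability: for the relevant vertices $u$ encountered during routing, $N_1(u)$ together with $s,t$ does not reveal which corridor leads to $t$; (4) run the adversary argument to lower-bound the travelled distance by $\Omega(\sqrt n)$; and (5) bound $d_G(s,t) = O(\sqrt n)$ (or normalize distances so $|st|=\Theta(1)$ and the good path has length $\Theta(1)$), and conclude the ratio is $\Omega(\sqrt n)$.

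The main obstacle I expect is step (2): unlike a general triangulation, the constrained $\Theta_6$-graph has a rigid, canonically defined edge set, so I do not get to choose the edges freely — I must engineer the point coordinates and the constraint segments so that the $\Theta_6$ closest-vertex-per-subcone rule spits out exactly the comb structure and nothing extra (no unwanted ``shortcut'' edges that would let a clever algorithm detect or bypass the bad corridors). In particular I must ensure that a vertex deep in a wrong corridor does not happen to be the closest visible vertex in some subcone of a vertex in a different corridor or near $t$, which would collapse the lower bound; constraints are the tool for blocking exactly these visibility edges while keeping the intended ones, and getting this simultaneously consistent for all $\Theta(\sqrt n)$ corridors is the delicate part. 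A secondary subtlety is making the branching vertices have genuinely identical $1$-neighbourhoods despite the asymmetry hidden further down each corridor; again this is arranged by placing the distinguishing features (the dead ends vs.\ the path to $t$) at distance $\ge 2$ from every decision vertex.
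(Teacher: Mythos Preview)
Your plan does not match the construction in~\cite{BBCDFLMM00}, and consequently not the paper's adaptation of it either. That lower bound is not a comb with $\Theta(\sqrt{n})$ locally indistinguishable corridors among which an adversary designates the good one; it is a \emph{two-instance} argument on a grid. The paper takes an $n\times n$ grid, stretches it so each horizontal edge has length $n$, turns every horizontal edge into a constraint (forcing the constrained $\Theta_6$-graph between adjacent rows to consist only of diagonal edges), and places $s$ below and $t$ above. It then runs the given deterministic algorithm on this full instance $G$, recording its path $\pi$. If $\pi$ already uses at least $n\sqrt{n}$ non-vertical steps it has length $\Omega(n^2\sqrt{n})$ against a zigzag path of length $O(n^2)$ in $G$, and we are done. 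Otherwise an averaging argument over $2\sqrt{f(n)}$ columns near $s$ (where $f(n)$ counts the non-vertical steps of $\pi$) produces a column that $\pi$ touches at most $\sqrt{f(n)}$ times. A \emph{second} instance $G'$ is now built by keeping only the vertices of $\pi$ together with all their $G$-neighbours and incident constraints; this guarantees that every $1$-neighbourhood along $\pi$ is identical in $G$ and $G'$, so the deterministic algorithm follows $\pi$ again in $G'$. But $G'$ is sparse along the chosen column, which therefore carries an ``almost vertical'' $s$--$t$ path of length $O(n\sqrt{f(n)})$ (with $O(\sqrt{f(n)})$ short detours where $\pi$ interferes), while $\pi$ still has length $\Omega(n\cdot f(n))$; the $\Omega(\sqrt{n})$ ratio follows.

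The missing idea in your proposal is precisely this post-hoc construction of $G'$ from $\pi$: it is what makes the indistinguishability requirement trivial (you only need to preserve local views along \emph{one} already-known path), whereas your comb requires an entire family of instances sharing identical $1$-neighbourhoods along every possible exploration prefix. Your step~(2) concern is therefore more serious than you acknowledge---for instance, a single branching vertex with $\Theta(\sqrt{n})$ outgoing $\Theta_6$-edges would need $\Theta(\sqrt{n})$ incident constraints just to create that many subcones, and once you spread the branching over a region you are essentially rebuilding the grid anyway. (A minor point: your accounting ``travelled $\Omega(\sqrt{n})$, shortest $O(\sqrt{n})$'' as written gives ratio $O(1)$; you presumably mean travelled $\Omega(n)$ against $O(\sqrt{n})$, or a consistent rescaling.)
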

\begin{proof}[Proof.]
  The following construction is illustrated in Figure~\ref{fig:RoutingLowerBoundConstruction}a-e. Consider a $c \times c$ grid of vertices for an integer $c$ and shift every second row to the right by half a unit. We stretch the grid, such that each horizontal line segment has length $2c$. Next, we replace each horizontal line segment by a constraint to prevent vertical visibility edges and we remove all other line segments. After that, we add two additional vertices, source $s$ and destination $t$, centered horizontally at one unit below the bottom row and one unit above the top row, respectively. 
  
  \begin{figure}[ht]
    \begin{center}
      \includegraphics{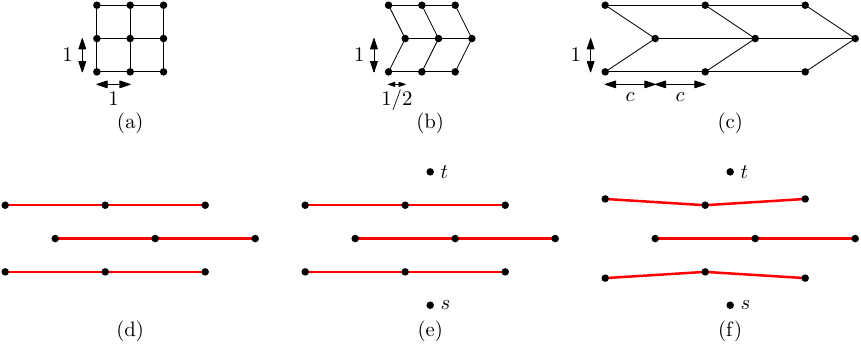}
    \end{center}
    \caption{Constructing the lower bound: (a) the gird, (b) after shifting, (c) after stretching (d) adding the constraints, (e) adding $s$ and $t$, (f) conforming to general position.}
    \label{fig:RoutingLowerBoundConstruction}
  \end{figure}

  To conform to our general position assumption, we move all vertices by at most some arbitrarily small amount $\epsilon$, such that no two vertices define a line parallel to one of the rays that define the cones and no three vertices are collinear (see Figure~\ref{fig:RoutingLowerBoundConstruction}f). As part of this move, we ensure that each vertex on the bottom row has $s$ as its closest vertex in cone $C_2$ or $C_4$ (depending on whether it lies to the right or left of $s$), and that each vertex on the top row has $t$ as its closest vertex in cone $C_1$ or $C_5$ (again depending on whether it lies to the left or right of $t$). This can be done e.g.\ by placing the bottom row on the upper hull of an ellipse and placing the top row on the lower hull of an ellipse. On this point set and these constraints, we build the constrained $\Theta_6$-graph $G$ (see Figure~\ref{fig:RoutingLowerBoundGrid}). Note that vertical edges only appear at the left and right grid boundaries.

  \begin{figure}[ht]
    \begin{center}
      \includegraphics{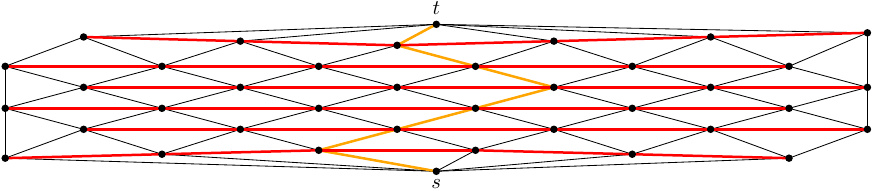}
    \end{center}
    \caption{The constrained $\Theta_6$-graph starting from a grid, using horizontal constraints to block vertical edges, and the orange path of the routing algorithm.}
    \label{fig:RoutingLowerBoundGrid}
  \end{figure}

  Consider any deterministic 1-local $\infty$-memory routing algorithm and let $\pi$ be the path this algorithm takes when routing from $s$ to $t$. We note that by construction, $\pi$ consists of at least $c + 1$ steps. If $\pi$ consists of more than $c \sqrt{c}$ non-vertical steps, we truncate it after the first $c \sqrt{c}$ non-vertical steps.
Thus, in the remainder of this proof, we consider only paths having at most $k$ non-vertical steps for $k \leq c \sqrt{c}$. The overall idea of the proof is to reduce $G$ to a $\Theta_6$-graph $G'$ of size $\Theta(c + k)$ in a way which does not change the path $\pi$ (up to its truncation point, if present) taken by the algorithm, and then to show that $\pi$ is not $o( \sqrt{c + k})$-competitive with respect to the shortest path in~$G'$. This proves that no deterministic 1-local $\infty$-memory routing algorithm can be $o( \sqrt{n})$-competitive with respect to the shortest path on all $\Theta_6$-graphs.

To construct~$G'$, we define the \emph{surroundings} of a vertex~$v$ on $\pi$ to be $v$ itself, the vertices connected to it by either an edge or a constraint in~$G$, and the constraints in~$G$ between these vertices. Thus, for $v$ in the interior of~$G$, its surroundings are hexagonal in shape and contain seven vertices and four constraints (see Figure~\ref{fig:RoutingLowerBoundGrid}). Informally, the union of the surroundings of vertices of~$\pi$ can be seen as sweeping this hexagonal shape along~$\pi$. For $v$ on the border of~$G$, its surroundings are slightly smaller. For $s$ and $t$, their surroundings constitute the bottom and top row, including the constraints in these rows. We let~$G'$ be the $\Theta_6$-graph constructed on the union of the surroundings of all vertices of~$\pi \cup \{ t \}$ (the inclusion of~$t$ is only relevant if $\pi$ was truncated). This construction is illustrated in Figure~\ref{fig:RoutingLowerBoundGraph}. Clearly, the graph~$G'$ has $O(c + k)$ vertices and constraints. It is easy to check that the 1-neighborhood of any vertex~$v$ on $\pi$ is the same in~$G'$ as in $G$, hence the routing algorithm must follow~$\pi$ also in $G'$.

  \begin{figure}[ht]
    \begin{center}
      \includegraphics{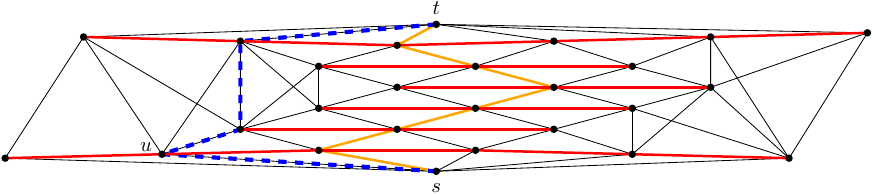}
    \end{center}
    \caption{The constrained $\Theta_6$-graph that looks the same from the orange path of the routing algorithm, but has an mostly vertical dashed blue path.}
    \label{fig:RoutingLowerBoundGraph}
  \end{figure}

The bottom row contains~$c$ vertices. We now consider the $2\sqrt{k}$ horizontally most central of these, that is, the first $\sqrt{k}$ vertices to the left of~$s$ and the first $\sqrt{k}$ vertices to the right of~$s$. Setting $c \geq 16$, the bottom row does contain at least these $2 \sqrt{k}$ vertices, by $k \leq c \sqrt{c}$. Setting $c$ a bit higher, we can assume that it contains $\Omega(1)$ more vertices at each end. Next, consider a vertical line through each of these $2 \sqrt{k}$ vertices. Let $\pi'$ be $\pi$ minus the vertices~$s$ and $t$. We say that a vertex of $\pi'$ \emph{touches} such a vertical line if its surroundings contain a point on that line. Hence, any vertex along $\pi'$ touches $O(1)$ vertical lines (see Figure~\ref{fig:RoutingLowerBoundGrid}). Since the vertical lines are $\Omega(1)$ grid positions away from the left and right sides of the grid, no vertical step of~$\pi'$ can touch any of these lines. Hence, the total number of line touches by the vertices along $\pi'$ is at most $O(k)$. Hence, on average, a line is touched $O(k/ \sqrt{k}) = O(\sqrt{k})$ times. This implies that there exists a vertical line that is touched $O(\sqrt{k})$ times. Let $u$ be vertex on the bottom row whose vertical line is touched the fewest number of times. 

We now prove that a `mostly vertical' path from $u$ to the top row is contained in $G'$, which will provide a path $G'$ between $s$ and $t$ much shorter than the path $\pi$ which the algorithm must follow. Assume first that the line of~$u$ is touched zero times. In the remainder of the proof, we set $c$ to be odd, such that vertices on the top and bottom row align horizontally. Since the minimal horizontal distance between vertices in the grid is $2c$, while the maximal vertical distance is $c$, $u$ can see exactly one vertex in $C_0$, namely the vertex it aligns horizontally with in the top row. Thus, there is a vertical edge between these two vertices in~$G'$. If the line of~$u$ is touched more than zero times, each touch covers a part of the line with some parts of the hexagonal shape. The coverings may overlap, and they give rise to a natural decomposition of the line into maximal covered segments with non-covered segments in between. A core observation is that a covered segment vertically extending $h$ grid levels can be traversed by $h$ zig-zag edges in~$G'$, of total length~$O(ch)$. Some examples of this are shown in Figure~\ref{fig:RoutingLowerBoundDetours}.

  \begin{figure}[h]
    \begin{center}
      \includegraphics{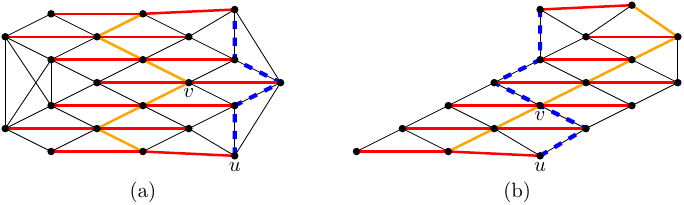}
    \end{center}
    \caption{Two examples of covered segments and their zig-zag detours: (a) when $\pi$ gets close but does not meet the vertical line through $u$, (b) when $\pi$ crosses the vertical line through $u$ once.}
    \label{fig:RoutingLowerBoundDetours}
  \end{figure}

Another core observation is that for each uncovered segment of the line, there will be a vertical edge in~$G'$ from the top vertex of the covered segment below to the bottom vertex of the covered segment above (again due to the vertex distances in the grid). Thus, the vertical edge from the case of zero touches is broken up by zig-zag shaped detours (one detour for each covered segment). The resulting path has length $O(c \sqrt{k})$, since the line through $u$ is touched by at most $O(\sqrt{k})$ vertices of $\pi$, each of which can cover only $O(1)$ grid levels of the line. Recalling that the edge from $s$ to $u$ has length at most $c \sqrt{k}$, we conclude that $G'$ contains a path from $s$ to $t$ of length $O(c \sqrt{k})$: Follow the edge from $s$ to $u$, follow the above path from $u$ to the top row of $G'$, and follow the edge to $t$.

  To complete the proof, we look at the number of non-vertical edges of $\pi$, i.e. $k$. If $k \leq c$, the routing path follows at least one vertical edge along the boundary of~$G$. It follows that $\pi$ has length at least $\Omega(c^2)$, as the left and right boundary of $G$ are at distance $\Omega(c^2)$ from $s$. Since the length of the mostly vertical path is $O(c \sqrt{k})$, $\pi$ is not $o(c/\sqrt{k})$-competitive on a graph of size $\Theta(c + k)$, which for $k \leq c$ implies that $\pi$ is not $o(\sqrt{c})$-competitive on a graph of size $\Theta(c)$. Hence, when we take $n = c$, the theorem is proven for this case. 
 
  If $k > c$, the length of $\pi$ is dominated by the non-vertical edges of length $c$, leading to a path length of $\Omega(c k)$. Since the length of the mostly vertical path is $O(c \sqrt{k})$, this implies that $\pi$ is not $o(\sqrt{k})$-competitive on a graph of size $\Theta(k)$. Hence, when we take $n = k$, the theorem is proven for this case. 

Thus, since $G'$ can be constructed for any deterministic 1-local routing algorithm, we have shown that no deterministic 1-local routing algorithm is $o(\sqrt{n})$-competitive on all pairs of vertices in a graph of size $O(n)$. 
\end{proof}

\section[Routing on the Constrained $\Theta_6$-Graph]{Routing on the Constrained $\boldsymbol{\Theta_6}$-Graph}
In this section, we provide a 1-local routing algorithm on the constrained $\Theta_6$-graph for any pair of visible vertices. Since the constrained $\Theta_6$-graph is the union of two constrained \graph{s}, we describe a routing algorithm for the constrained \graph for the case where the destination $t$ lies in a positive subcone of the source $s$. After describing this algorithm and proving that it is 2-competitive, we describe how to use it to route 1-locally on the constrained $\Theta_6$-graph. Throughout this section, we use the following auxiliary lemma proven by Bose~\etal~\cite{BFRV12Constrained}. We say that a region is \emph{empty} if it does not contain any vertices of $P$.

\begin{lemma}
  \label{lem:ConvexChain}
  Let $u$, $v$, and $w$ be three arbitrary points in the plane such that $u w$ and $v w$ are visibility edges and $w$ is not the endpoint of a constraint intersecting the interior of triangle $u v w$. Then there exists a convex chain of visibility edges from $u$ to $v$ in triangle $u v w$, such that the polygon defined by $u w$, $w v$ and the convex chain is empty and does not contain any constraints (see Fig~\ref{fig:VisiblePointInsideTriangle}).
\end{lemma}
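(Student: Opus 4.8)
The plan is to build the chain incrementally, using the definition of visibility together with the hypothesis that $w$ sees both $u$ and $v$ and is not an endpoint of a constraint cutting through the interior of $\triangle uvw$. First I would observe that if the segment $uv$ itself is a visibility edge, then the chain consisting of the single edge $uv$ works: the triangle $uvw$ is the polygon, and I must argue it is empty and constraint-free — but that need not hold in general, so the real work is the recursive step. The natural approach is induction on the number of vertices of $P$ strictly inside $\triangle uvw$ (plus possibly those interacting with it via constraints).

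For the inductive step, suppose $uv$ is \emph{not} a visibility edge, or $\triangle uvw$ is not empty/constraint-free. Since $w$ sees $u$ and $w$ sees $v$, and $w$ is not the endpoint of a constraint meeting the interior of $\triangle uvw$, any constraint that intersects the interior of $\triangle uvw$ must cross the boundary, and it cannot cross $uw$ or $wv$ properly (those are visibility edges) nor can it emanate from $w$; hence any such constraint has both endpoints on, or crosses, the segment side $uv$ — more carefully, each such constraint has an endpoint inside the closed triangle and is "visible" from the $w$-side. The key sub-claim I would isolate: there exists a vertex $x$ strictly inside $\triangle uvw$ (or on $uv$) such that $wx$ is a visibility edge and $w$ is not the endpoint of a constraint meeting the interior of either $\triangle uwx$ or $\triangle xwv$. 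A good candidate for $x$ is the vertex inside $\triangle uvw$ that is "lowest" as seen from $w$ — e.g. minimizing the angle from $wu$, or the first vertex hit by a sweep of a ray pivoting at $w$ from $wu$ toward $wv$; such an $x$ is automatically visible from $w$ because nothing blocks the ray up to the first vertex, and the two sub-triangles each contain strictly fewer interior vertices, so induction applies to give convex chains from $u$ to $x$ and from $x$ to $v$.

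Then I would concatenate the two chains at $x$ and argue (i) emptiness: the union of the two empty sub-polygons plus $\triangle uwx \cap$ stuff tiles $\triangle uvw$ minus the chains, so no vertex can lie in the region bounded by $uv$-side and the combined chain; (ii) no constraints in that region, by the same decomposition; and (iii) convexity at $x$: this is the delicate geometric point — I need the combined chain to turn the right way at $x$. Choosing $x$ as the vertex minimizing the angle $\angle u w x$ (equivalently, the vertex such that all other interior vertices lie on the far side of line $wx$ from... ) should force the chain to be convex toward $w$, because every vertex of either sub-chain lies in the half-plane bounded by $wx$ containing $v$ (resp. $u$), and the extreme-angle choice guarantees the turn at $x$ bends away from $w$. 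I expect the main obstacle to be exactly this convexity bookkeeping at the join vertex $x$ — verifying that the local choice of $x$ as the angularly extreme vertex propagates to global convexity of the concatenated chain, and handling the degenerate-looking but real case where $x$ lies on the segment $uv$ itself. A secondary subtlety is making sure the "no constraint from $w$ into the interior" hypothesis is correctly inherited by both sub-triangles; since $\triangle uwx$ and $\triangle xwv$ are contained in $\triangle uvw$ and share the apex $w$, any constraint from $w$ into a sub-triangle's interior would also meet $\triangle uvw$'s interior, so the hypothesis carries over cleanly.
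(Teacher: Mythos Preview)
The paper does not prove this lemma; it is imported from~\cite{BFRV12Constrained} (the sentence introducing it reads ``we use the following auxiliary lemma proven by Bose~\etal''), so there is no proof in the present paper to compare your proposal against.

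Evaluated on its own, your inductive plan has the right shape, but the specific choice of splitting vertex $x$ does not deliver the convexity you need, and this is a genuine gap rather than just bookkeeping. Taking $x$ to minimise $\angle uwx$ guarantees that $\triangle uwx$ is empty and that $wx$ is a visibility edge, but it does \emph{not} force the concatenated chain to be convex at $x$. Your justification conflates two different lines: the angular choice controls where the sub-chain vertices sit relative to the line through $w$ and $x$, whereas convexity of the turn at $x$ is governed by their position relative to the line through $u$ and $x$. Concretely, take $w=(0,10)$, $u=(-10,0)$, $v=(10,0)$ and interior vertices $x=(-4,5)$ and $q=(-1,8)$; the angular sweep at $w$ hits $x$ first, $\triangle uwx$ is empty, and the recursion on $\triangle xwv$ returns $x,q,v$, so your construction outputs the chain $u,x,q,v$. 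But the turn at $x$ is a left turn (the cross product of $x-u=(6,5)$ and $q-x=(3,3)$ is $+3$) while the turn at $q$ is a right turn, so the chain is not convex. The correct convex chain here is $u,q,v$, with $x$ lying strictly below the edge $uq$.

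The standard remedy is to choose $x$ as the interior vertex of $\triangle uvw$ farthest from the line $uv$ (equivalently, a vertex of the upper convex hull of $\{u,v\}$ together with the interior vertices). With that choice every other interior vertex lies on the $uv$-side of both lines $ux$ and $xv$, so the inductive sub-chains stay on that side and the turn at $x$ is automatically in the required direction; visibility of $wx$ follows because a constraint blocking it would force a vertex strictly farther from $uv$ than $x$ inside the triangle. Your inheritance argument for the ``no constraint from $w$'' hypothesis is correct and carries over unchanged with this modified choice of~$x$.
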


\begin{figure}[ht]
  \begin{center}
    \includegraphics{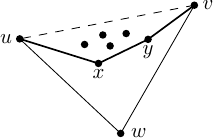}
  \end{center}
  \caption{A convex chain from $u$ to $v$ via $x$ and $y$.}
  \label{fig:VisiblePointInsideTriangle}
\end{figure}

Recall that when working on upper bounds, we use the notion of competitiveness with respect to the Euclidean shortest path: A routing algorithm is {\em $c$-competitive with respect to the Euclidean shortest path} provided that the total distance traveled by the message is not more than $c$ times the Euclidean shortest path length between source and destination. The \emph{routing ratio} of an algorithm \emph{with respect to the Euclidean shortest path} is the smallest $c$ for which it is $c$-competitive with respect to the Euclidean shortest path.

\subsection[Positive Routing on the Constrained \graph]{Positive Routing on the Constrained Half-$\boldsymbol{\Theta_6}$-Graph}
Before describing how to route on the constrained \graph when $t$ lies in a positive subcone of $s$, we first show that there exists a path in canonical triangle \canon{s}{t}. 

\begin{lemma}
  \label{lem:PathInTriangle}
  Given two vertices $u$ and $w$ such that $u$ and $w$ see each other and $w$ lies in a positive subcone $C^u_{i,j}$, there exists a path between $u$ and $w$ in the triangle \canon{u}{w} in the constrained \graph. 
\end{lemma}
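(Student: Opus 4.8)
The plan is to prove this by induction on the number of vertices inside the canonical triangle $\canon{u}{w}$, following the standard inductive strategy used for $\Theta$-graph spanning proofs but adapted to the constrained setting via Lemma~\ref{lem:ConvexChain}. For the base case, suppose $\canon{u}{w}$ contains no vertex other than $u$ and $w$. If $w$ is the closest visible vertex to $u$ in its subcone $C^u_{i,j}$, then $uw$ is an edge of the constrained \graph and we are done. Otherwise, let $x$ be the vertex that $u$ connects to in the subcone $C^u_{i,j}$ containing $w$; then $x$ lies in $\canon{u}{w}$ (since $x$ projects no farther than $w$ on the bisector and lies in the same cone), so by emptiness $x \in \{u,w\}$, forcing $x = w$ — contradiction. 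Hence in the base case $uw$ is always an edge. I should be slightly careful about the subcone-versus-cone distinction: $x$ is the closest vertex in the subcone, but membership in the canonical triangle is defined by the \emph{cone} borders, which is exactly why the projection/distance comparison places $x$ inside $\canon{u}{w}$.

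For the inductive step, assume $\canon{u}{w}$ contains at least one vertex besides $u,w$, and that the statement holds for all pairs whose canonical triangle contains fewer vertices. Let $x$ be the vertex to which $u$ connects in the subcone of $C^u_i$ that contains $w$ (this vertex exists and is visible from $u$ by definition of the graph). There are two cases. If $x = w$, we are done. If $x \neq w$, then $x \in \canon{u}{w}$ as above, and $ux$ is an edge of the graph. Now I want to recurse on the pair $(x, w)$: I claim $w$ lies in a positive cone of $x$ and that $\canon{x}{w} \subseteq \canon{u}{w}$ with strictly fewer vertices. The containment holds because $x$ is inside $\canon{u}{w}$ and $x$ projects onto the bisector of $C^u_i$ no farther than $w$, so the portion of $C^u_i$ beyond $x$'s perpendicular line is contained in the cone of $x$ in the same direction; combined with the fact that the far side of $\canon{x}{w}$ is the perpendicular through $w$, we get $\canon{x}{w} \subsetneq \canon{u}{w}$ (strict since $x$ itself is excluded). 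Then apply induction to $(x,w)$ to get a path inside $\canon{x}{w} \subseteq \canon{u}{w}$, and prepend the edge $ux$.

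The main obstacle is the visibility bookkeeping: the inductive hypothesis requires that $x$ and $w$ see each other, but a priori a constraint could separate them even though both are visible from $u$ and both lie in $\canon{u}{w}$. This is precisely where Lemma~\ref{lem:ConvexChain} enters. The idea is to apply it with the triple $(x, w, u)$: since $ux$ and $uw$ are visibility edges and — after invoking the general-position / subcone structure — $u$ is not the endpoint of a constraint cutting through the interior of triangle $uxw$ (any such constraint with endpoint $u$ would define a subcone boundary with $x$ and $w$ on opposite sides, contradicting that $x$ is the graph-neighbor in the subcone containing $w$), Lemma~\ref{lem:ConvexChain} yields a convex chain of visibility edges from $x$ to $w$ lying inside triangle $uxw \subseteq \canon{u}{w}$, with the enclosed region empty of vertices and constraints. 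I can then walk along this convex chain: each consecutive pair on the chain sees the next, each such pair has a canonical triangle strictly inside $\canon{u}{w}$ containing strictly fewer vertices (by the emptiness of the chain's enclosed region), and so induction applies to each chain edge, giving a path inside $\canon{u}{w}$ from $x$ to $w$; prepending $ux$ finishes the argument. The delicate points to get right are (i) that the chain vertices lie in the \emph{positive} subcones of their predecessors so that the lemma statement is applicable unchanged, which follows from convexity of the chain and the geometry of $60^\circ$ cones, and (ii) the strict decrease in the vertex count at every recursive call, which holds because the enclosed empty region guarantees each smaller canonical triangle misses at least the vertices $u$ or the chain-neighbors that were in $\canon{u}{w}$.
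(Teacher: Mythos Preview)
Your plan is essentially the paper's proof: induct, take the closest visible vertex $v_0$ in the subcone containing $w$, invoke Lemma~\ref{lem:ConvexChain} to get a convex chain from $v_0$ to $w$, and recurse on consecutive chain pairs. Two small deviations are worth flagging.

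First, the paper inducts on the \emph{area} of $\canon{u}{w}$ (more precisely, its rank among all canonical triangles ordered by area), which makes the strict-decrease check immediate from strict containment. Your vertex-count measure also works under the general-position assumptions, but your stated justification (``misses $u$'') is off, since $u$ is the apex and not in the interior; the correct reason is that at least one of $v_i,v_{i+1}$ lies in the interior of $\canon{u}{w}$ but only on the boundary of the recursive canonical triangle.

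Second, you apply Lemma~\ref{lem:ConvexChain} to triangle $u\,v_0\,w$ with hub $u$, arguing (correctly) that no constraint incident to $u$ can enter its interior because $v_0$ and $w$ share a subcone. The paper instead introduces an auxiliary point $x$ --- the intersection of $uw$ with the far side of $\canon{u}{v_0}$ --- first argues that $v_0$ sees $x$, and then applies the lemma to the smaller triangle $v_0\,x\,w$ with hub $x$. Your variant is slightly more direct; the paper's buys a smaller enclosing triangle for the chain.

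One imprecision to fix: it is not true that each chain vertex lies in a positive subcone of its \emph{predecessor}. As the paper makes explicit, for each chain edge one has either $v_{i+1}\in C_0^{v_i}$ or $v_i\in C_1^{v_{i+1}}$, and the induction hypothesis is applied in whichever direction puts one endpoint in a positive cone of the other; the containment $\canon{\,\cdot\,}{\,\cdot\,}\subseteq\canon{u}{w}$ then holds in both cases.
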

\begin{proof}[Proof.]
  We assume without loss of generality that $w$ lies in $C^u_{0, j}$. We prove the lemma by induction on the area of the canonical triangle \canon{u}{w}. Formally, we perform induction on the rank of the triangle in the ordering, according to their area, of the canonical triangles \canon{x}{y} of all pairs of visible vertices $x$ and $y$. 

  \textbf{Base case:} If \canon{u}{w} is the smallest canonical triangle, then $w$ is the closest visible vertex to $u$ in a positive subcone of $u$. Hence there is an edge between $u$ and $w$ and this edge lies entirely inside \canon{u}{w}. 

  \textbf{Induction step:} We assume that the induction hypothesis holds for all pairs of vertices that can see each other and have a canonical triangle whose area is smaller than the area of \canon{u}{w}. If $u w$ is an edge in the constrained \graph, the induction hypothesis follows by the same argument as in the base case. If there is no edge between $u$ and $w$, let $v_0$ be the vertex closest to $u$ in the positive subcone that contains $w$, and let $a_0$ and $b_0$ be the upper left and right corner of \canon{u}{v_0} (see Figure~\ref{fig:ConvexChain}). We assume without loss of generality that $v_0$ lies to the left of $u w$. 

  \begin{figure}[ht]
    \begin{center}
      \includegraphics{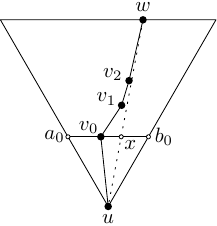}
    \end{center}
    \caption{An example of a convex chain from $v_0$ to $w$.}
    \label{fig:ConvexChain}
  \end{figure}

  Let $x$ be the intersection of $u w$ and $a_0 b_0$. By definition $x$ can see $u$ and $w$. Since $v_0$ is the closest visible vertex to $u$, $v_0$ can see $x$ as well. Otherwise Lemma~\ref{lem:ConvexChain} would give us a convex chain of vertices connecting $v_0$ to $x$, all of which would be closer and able to see $u$, contradicting that $v_0$ is the closest visible vertex to $u$. By applying Lemma~\ref{lem:ConvexChain} to triangle $v_0 x w$, a convex chain $v_0, v_1, ..., v_k = w$ of visibility edges connecting $v_0$ and $w$ exists and the region bounded by $x, v_0, v_1, ..., v_k = w$ is empty (see Figure~\ref{fig:ConvexChain}).

  Since every vertex $v_i$ is visible to vertex $v_{i+1}$, we can apply induction to each pair of consecutive vertices along the convex chain. Depending on whether $v_{i+1} \in C^{v_i}_0$ or $v_i \in C^{v_{i+1}}_1$, there exists a path between $v_i$ and $v_{i+1}$ in \canon{v_i}{v_{i+1}} or \canon{v_{i+1}}{v_i}. Since each of these triangles is contained in \canon{u}{w}, this gives us a path between $u$ and $w$ that lies inside \canon{u}{w}. 
\end{proof}

\noindent \textbf{Positive Routing Algorithm for the Constrained Half-$\boldsymbol{\Theta_6}$-Graph} \\
Next, we describe how to route from $s$ to $t$, when $s$ can see $t$ and $t$ lies in a positive subcone $C^s_{i,j}$ (see Figure~\ref{fig:PositiveRouting}): When we are at $s$, we follow the edge to the closest vertex in the subcone that contains $t$. When we are at any other vertex $u$, we look at all edges in the subcones of $C^u_i$ and all edges in the subcones of the adjacent negative cone $\overline{C}^u$ that is intersected by $s t$. An edge in a subcone of $\overline{C}^u$ is considered only if it does not cross $s t$. For example, in Figure~\ref{fig:PositiveRouting}, we do not consider the edge to $v_1$ since it lies in $\overline{C}^u$ and crosses $s t$. It follows that we can cross $s t$ only when we follow an edge in $C^u_i$. 

\begin{figure}[ht]
  \begin{center}
    \includegraphics{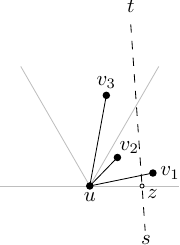}
  \end{center}
  \caption{An example of routing from $s$ to $t \in C^s_0$. The dashed line represents the visibility line between $s$ and $t$.}
  \label{fig:PositiveRouting}
\end{figure}

Let $z$ be the intersection of $s t$ and the boundary of $\overline{C}^u$ that is not a boundary of $C^u_i$. We follow the edge $u v$ that minimizes the unsigned angle $\angle z u v$. For example, in Figure~\ref{fig:PositiveRouting}, when we are at vertex $u$ we follow the edge to $v_2$ since, out of the two remaining edges $u v_2$ and $u v_3$, $\angle z u v_2$ is smaller than $\angle z u v_3$. We note that edges in $\overline{C}^u$ are added by the vertices in that cone, since $u$ lies in their positive cone $C$. We also note that during the routing process, $t$ does not necessarily lie in $C^u_i$. Finally, since the algorithm uses only information about the location of $s$ and $t$ and the neighbors of the current vertex, it is a 1-local routing algorithm. 

We proceed by proving that the above routing algorithm can always perform a step, i.e. at every vertex reached by the algorithm there exists an edge that is considered by the algorithm. 

\begin{lemma}
  \label{lem:RoutingStep}
  The routing algorithm can always perform a step in the constrained \graph.
\end{lemma}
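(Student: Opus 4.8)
The plan is to show that, for any vertex $u$ reached by the algorithm, at least one of the edges considered at $u$ — namely the edges in the subcones of $C^u_i$ together with the non-crossing edges in the subcones of the adjacent negative cone $\overline{C}^u$ pierced by $st$ — actually exists in the constrained \graph. The natural strategy is to invoke Lemma~\ref{lem:PathInTriangle}: since $s$ and $t$ are visible and $t$ lies in a positive subcone of $s$, there is a path from $s$ to $t$ inside \canon{s}{t}, so in particular every vertex on this path is connected to $t$ by \emph{some} edge. But we need the stronger statement that the algorithm, at the particular vertex it currently occupies, has an edge it is willing to follow. So first I would argue that the subcone of $C^u_i$ (or of the relevant negative cone) containing $t$ — more precisely, the subcone whose associated canonical triangle is a subset of the situation at $u$ — is nonempty, and that the edge added there does not cross $st$ in the forbidden way.

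Concretely, I would split into two cases depending on whether $t \in C^u_i$ or not. If $t \in C^u_i$: then $u$ sees $t$ (this needs a short argument — either $u$ is on the path from $s$ to $t$ guaranteed by Lemma~\ref{lem:PathInTriangle} and lies inside \canon{s}{t}, hence sees $t$, or one establishes visibility directly from the invariant the algorithm maintains), so by the construction of the graph $u$ has an edge in the subcone of $C^u_i$ containing $t$, and this edge lies inside \canon{u}{t} by Lemma~\ref{lem:PathInTriangle}, hence cannot cross $st$ improperly; it is therefore a considered edge. If $t \notin C^u_i$, then $st$ crosses the boundary between $C^u_i$ and the adjacent negative cone $\overline{C}^u$, and $t$ lies in $\overline{C}^u$ (or beyond, but the relevant point $z$ where $st$ exits is in $\overline{C}^u$). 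Here I would use Lemma~\ref{lem:ConvexChain} or a direct visibility argument on the triangle $u z t$ (or a suitable subtriangle) to produce a visible vertex $w$ in the subcone of $\overline{C}^u$ cut off by $st$: the vertex that is added as an edge in that subcone — added from $w$'s side, since $u \in C^w$ — and then verify that $uw$ does not cross $st$. The region between $u$, the segment $st$, and the chain guaranteed by Lemma~\ref{lem:ConvexChain} is empty, which is what forces the closest-vertex edge in that subcone to stay on the correct side.

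The main obstacle I expect is the second case: ensuring that there really is a vertex of $P$ in the subcone of the negative cone on the correct side of $st$, and that the edge the graph places there is one the algorithm will accept (i.e. it does not cross $st$). The subtlety is that the edge in a negative subcone is added by the far vertex, with distance measured along the bisector of that far vertex's positive cone, not along anything local to $u$; so I must be careful that the "closest visible vertex" chosen by the far side is exactly the one sitting in the region cut off between $st$ and $u$. I would handle this by applying Lemma~\ref{lem:ConvexChain} to the triangle bounded by $u$, the exit point $z$ of $st$, and $t$ (after checking $u$ is not the endpoint of a constraint cutting that triangle, or reducing to a subtriangle where this holds), obtaining an empty region; the convex chain's first vertex $v_0$ from $u$ then lies in the desired negative subcone and sees $u$, and emptiness of the region guarantees the graph edge added in that subcone is $uv_0$ (or something on $v_0$'s side of $st$), which does not cross $st$. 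A secondary, minor obstacle is the boundary/degenerate behaviour at $z$ and the general-position handling, which I would dispatch by noting the general position assumptions rule out the problematic collinearities.

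In summary, the proof is: (1) reduce to showing a considered edge exists at the current vertex $u$; (2) if $t \in C^u_i$, establish $u$--$t$ visibility (via Lemma~\ref{lem:PathInTriangle}'s path lying inside \canon{s}{t}) and use the graph construction plus Lemma~\ref{lem:PathInTriangle} to get an edge in the $t$-subcone of $C^u_i$ that stays inside \canon{u}{t} and hence does not cross $st$; (3) if $t \notin C^u_i$, apply Lemma~\ref{lem:ConvexChain} to the triangle formed by $u$, the exit point $z$, and $t$ to extract a visible vertex in the cut-off negative subcone whose incident graph edge does not cross $st$; (4) conclude that in every case the algorithm has an edge to follow.
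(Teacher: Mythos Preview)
Your sketch misses the central device of the proof: an invariant carried along the routing path. The paper maintains that the simple polygon bounded by the portion of $st$ from the last crossing point $x$ up to the horizontal level of the current vertex, together with the routing path since $x$, is empty of vertices and constraints. This invariant is what makes all the visibility claims go through; without it your argument breaks at several points.

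Concretely: your claim in Case~1 that ``$u$ sees $t$'' is not justified. The routing algorithm need not follow the path produced by Lemma~\ref{lem:PathInTriangle}, and merely lying inside \canon{s}{t} does not give visibility of $t$ when constraints are present. In Case~2 you want to apply Lemma~\ref{lem:ConvexChain} to the triangle $uzt$, but you have no reason for $uz$ to be a visibility segment; the paper gets the analogous visibility (of the point $x'$ on $st$ at $u$'s height) precisely from the invariant. More subtly, when the candidate vertex $v$ lies in the negative cone $\overline{C}^u$, the canonical triangle \canon{v}{u} spills over into $C^u_2$ as well as $C^u_0$ and $\overline{C}^u_1$; only the invariant (which records from which cone $u$ was entered) lets you conclude the $C^u_2$ part is empty, so that the first edge of the path from $u$ to $v$ lands in a cone the algorithm actually considers. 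Finally, the invariant must be re-established after each step, and the paper spends three nontrivial cases doing so; your outline has no analogue of this maintenance argument. In short, the missing idea is not a local visibility trick but an inductive invariant about the entire region swept out between the path and $st$.
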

\begin{proof}[Proof.]
  Given two vertices $s$ and $t$ such that $s$ and $t$ can see each other, we assume without loss of generality that $t \in C^s_0$. We maintain the following invariant (see Figure~\ref{fig:Invariant}): 

  \begin{quote}
    \textbf{Invariant} Let $x$ be the last intersection of an edge of the routing path with $s t$ (initially $x$ is $s$), let $v_0, ..., v_k$ denote the endpoints of the edges following $x$ as selected by the algorithm, and let $x'$ be the intersection of $s t$ and the horizontal line through $v_k$. The simple polygon defined by $x, v_0, ..., v_k, x'$ is empty and does not contain any constraints. 
  \end{quote}

  \begin{figure}[ht]
    \begin{center}
      \includegraphics{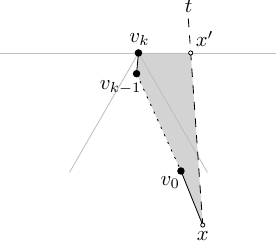}
    \end{center}
    \caption{By the invariant, the gray region is empty and does not contain any constraints.}
    \label{fig:Invariant}
  \end{figure}

  When the routing algorithm starts at $s$, it looks at the subcone that contains $t$. Since $t$ is visible from $s$, this subcone contains at least one visible vertex. Hence, it also contains a closest visible vertex $v_0$ and by construction, $s$ has an edge to $v_0$. Therefore, when the routing algorithm starts at $s$, it can follow an edge. 

  To see that the invariant is satisfied, we need to show that triangle $s v_0 x'$ is empty and does not contain any constraints in its interior. By construction $s$ cannot be the endpoint of any constraints in the interior of $s v_0 x'$, hence since $s x'$ and $s v_0$ are visibility edges, any constraint has at least one endpoint in $s v_0 x'$. Thus, it suffices to show that $s v_0 x'$ is empty. We prove this by contradiction, so assume that it is not empty. Since $s v_0$ and $s x'$ are visibility edges and by construction $s$ is not the endpoint of a constraint intersecting the interior of $s v_0 x'$, Lemma~\ref{lem:ConvexChain} gives us a convex chain of visibility edges between $v_0$ and $x'$. Since the region bounded by $s v_0$, $s x'$, and this chain is empty and does not contain any constraints, the vertex along this chain that is closest to $s$ is visible to $s$. However since every vertex in $s v_0 x'$ is closer to $s$ than $v_0$, this contradicts the fact that $v_0$ is the closest visible vertex to $s$. Hence, triangle $s v_0 x'$ must be empty and the invariant is satisfied. 

  When the routing algorithm is at vertex $u$ ($u \neq s$), we assume without loss of generality that $u$ lies to the left of $s t$. Let $h$ be the halfplane below the horizontal line through $t$ and let $h'$ be the halfplane to the left of $s t$. We need to show that $u$ has at least one edge in the union of $C^u_0 \cap h$ and $\overline{C}^u_1 \cap h \cap h'$. We first show that there exists a vertex that is visible to $u$ in the union of $C^u_0 \cap h$ and $\overline{C}^u_1 \cap h \cap h'$, by showing that such a vertex exists in the union of $C^u_0 \cap h \cap h'$ and $\overline{C}^u_1 \cap h \cap h'$. Since $t$ lies in this region, we know that it is not empty. Consider all vertices in this region and let $v$ be the vertex in this region that minimizes $\angle x' u v$. Note that we did not require there to be an edge between $u$ and $v$. Since $v$ minimizes $\angle x' u v$ and no constraint can cross $s t$ or $u x'$, $v$ is visible from $u$. We consider two cases: $v$ lies in a subcone of $C^u_0$ and $v$ lies in a subcone of $\overline{C}^u_1$. 

  If $v$ lies in $C^u_0 \cap h \cap h'$, it follows from Lemma~\ref{lem:PathInTriangle} and the fact that $v$ is visible from $u$ that there exists a path between $u$ and $v$ that lies inside \canon{u}{v}. Since \canon{u}{v} is contained in $C^u_0 \cap h$, there exists an edge in $C^u_0 \cap h$ and the routing algorithm can perform a step. 

  If $v$ lies in $\overline{C}^u_1 \cap h \cap h'$, it follows from Lemma~\ref{lem:PathInTriangle} and the fact that $v$ is visible from $u$ that there exists a path between $u$ and $v$ that lies inside \canon{v}{u}. Canonical triangle \canon{v}{u} intersects three cones of $u$ (see Figure~\ref{fig:RoutingStepExistence}): $C^u_0$, $\overline{C}^u_1$, and $C^u_2$. Since the routing algorithm follows edges in $C^u_0$ or $\overline{C}^u_1$, the routing path reaches $u$ by following edge $v_{k-1} u$ that lies in either $\overline{C}^u_0$ or $C^u_1$. This implies that $\canon{v}{u} \cap C^u_2$ is contained in the region of the invariant and is therefore empty. Hence, the first edge on the path from $u$ to $v$ lies in either $C^u_0 \cap h$ or $\overline{C}^u_1 \cap h \cap h'$ and the algorithm can perform a step. 

  \begin{figure}[ht]
    \begin{center}
      \includegraphics{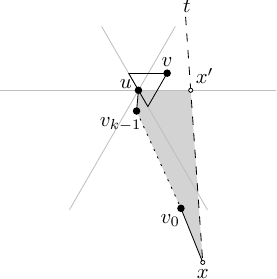}
    \end{center}
    \caption{By the invariant, the gray region is empty, so the path between $u$ and $v$ lies inside $\canon{v}{u} \cap (C^u_0 \cup \overline{C}^u_1)$.}
    \label{fig:RoutingStepExistence}
  \end{figure}

  It remains to show that after the algorithm takes a step, the invariant is satisfied at the new vertex $v$. Let $u v$ be the edge that the algorithm followed and let $x''$ be the intersection of $s t$ and the horizontal line through $v$. We consider three cases (see Figure~\ref{fig:MaintainingInvariant}): (a) $v$ lies in a subcone of $\overline{C}^u_1$, (b) $v$ lies in a subcone of $C^u_0$ and $u v$ does not cross $s t$, and (c) $v$ lies in a subcone of $C^u_0$ and $u v$ crosses $s t$. 

  \begin{figure}[ht]
    \begin{center}
      \includegraphics{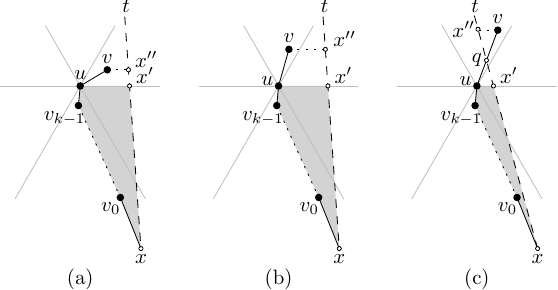}
    \end{center}
    \caption{The three types of steps the algorithm can take: (a) $v$ lies in a subcone of $\overline{C}^u_1$, (b) $v$ lies in a subcone of $C^u_0$ and $u v$ does not cross $s t$, and (c) $v$ lies in a subcone of $C^u_0$ and $u v$ crosses $s t$.}
    \label{fig:MaintainingInvariant}
  \end{figure}

  \textbf{Case (a):} If $v$ lies in a subcone of $\overline{C}^u_1$, we need to show that the quadrilateral $u v x'' x'$ is empty and does not contain any constraints (see Figure~\ref{fig:MaintainingInvariant}a). We first show that $u$ cannot be the endpoint of a constraint intersecting the interior of $u v x'' x'$. We prove this by contradiction, so assume it is and let $y$ be the other endpoint of the constraint. We first note that $\angle x' u y < \angle x' u v$. We look at $C^y_{1, j}$, the subcone of $C^y_1$ that lies below $u y$, and let $z$ be the lowest vertex in this subcone. If $u$ is the closest visible vertex in this subcone, $u y$ would be an edge, which contradicts that $v$ minimizes $\angle x' u v$. Otherwise, since $z$ is the lowest vertex in $C^y_{1, j}$, the visible region of \canon{z}{u} is empty and $u z$ is an edge. However, since $\angle x' u z < \angle x' u y < \angle x' u v$, we have a contradiction. Thus $u$ cannot be the endpoint of a constraint intersecting the interior of $u v x'' x'$. 

  Since $u$ is not the endpoint of a constraint intersecting the interior of $u v x'' x'$, and $u v$, $u x'$, and $x' x''$ are visibility edges, any constraint intersecting the interior of $u v x'' x'$ has at least one endpoint in $u v x'' x'$. Thus it suffices to show that $u v x'' x'$ is empty. We prove this by contradiction, so assume that $u v x'' x'$ is not empty and let $y$ be the lowest vertex in $u v x'' x'$. Let $C^y_{1, j}$ be the subcone of $C^y_1$ that contains $u$. Vertex $u$ is visible to $y$, since any constraint crossing $u y$ has an endpoint in $\overline{C}^u_1$ below $y$, contradicting that $y$ is the lowest vertex, or in the region bounded by $x, v_0, ..., v_{k-1}, u, x'$ which contradicts the invariant. Hence $y$ has an edge in $C^y_{1, j}$. This edge cannot be to $u$ since $\angle x' u y < \angle x' u v$. Since $y$ is the lowest vertex in $u v x'' x'$, it cannot have an edge to a vertex in $u v x'' x'$. Since by the invariant the region bounded by $x, v_0, ..., v_{k-1}, u, x'$ is empty, the edge of $y$ in $C^y_{1, j}$ must cross $u v$. However, this contradicts the fact that the constrained \graph is plane. Thus, $u v x'' x'$ is empty of both vertices and constraints. 

  \textbf{Case (b):} If $v$ lies in a subcone of $C^u_0$ and $u v$ does not cross $s t$, we again need to show that the quadrilateral $u v x'' x'$ is empty and does not contain any constraints (see Figure~\ref{fig:MaintainingInvariant}b). We first show that $u v x'' x'$ is empty. We prove this by contradiction, so assume that $u v x'' x'$ is not empty and let $y$ be the lowest vertex in $u v x'' x'$. We consider two cases: $y$ lies in $\overline{C}^u_1$ and $y$ lies in $C^u_0$. Since the case where $y$ lies in $\overline{C}^u_1$ is analogous to the Case~(a), we focus on the case where $y$ lies in a subcone of $C^u_0$.

  If $y$ lies in a subcone of $C^u_0$ and $y$ is visible to $u$, $u y$ would be an edge and $\angle x' u y < \angle x' u v$. So, assume that $y$ is not visible from $u$. This means that there is a constraint that crosses $u y$. Since the line $s t$ and the edges of the region bounded by $x, v_0, ..., v_{k-1}, u, x'$ are visibility edges, the lower endpoint of this constraint must lie in $x, v_0, ..., v_{k-1}, u, v, x''$. By the invariant, it cannot lie in $x, v_0, ..., v_{k-1}, u, x'$, so it must lie in $u v x'' x'$ and below $y$. However, this contradicts that $y$ is the lowest vertex in $u v x'' x'$. Since we arrived at a contradiction in both cases, we conclude that quadrilateral $u v x'' x'$ is empty. 

  Next, we show that $u v x'' x'$ does not contain any constraints. Since $u v x'' x'$ is empty, a the only way a constraint can intersect it, is when $u$ is one of its endpoints. Hence, it remains to show that $u$ cannot be the endpoint of a constraint intersecting the interior of $u v x'' x'$. We prove this by contradiction, so assume it is and let $y$ be the other endpoint of the constraint. Since $u v x'' x'$ is empty, $u y$ crosses $v x''$. Since $s t$ is a visibility edge, $u y$ cannot cross it. Vertex $y$ cannot lie in $\overline{C}^u_1 \cap h'$, since this would imply that either $u y$ is an edge or there exists a vertex $z$ in the subcone of $y$ below $u y$ that contains $u$, which in combination with Lemma~\ref{lem:PathInTriangle} implies that there  exists a path between $y$ and $u$ that lies below $u y$. Since both alternatives contradict that $v$ minimizes $\angle x' u v$, $y$ cannot lie in $\overline{C}^u_1 \cap h'$. Hence, it remains to consider the case where $y$ lies in a subcone of $C^u_0$. Let $C^u_{0, j}$ be the subcone of $C^u_0$ to the right of $u y$. 
  
  If $y$ lies below $t$, $C^u_{0, j}$ contains a closest visible vertex whose angle with $u x'$ is less than $\angle x' u v$, contradicting that the routing algorithm routes to $v$. 

  If $y$ lies above $t$, let $z$ be the lowest vertex in the union of $C^u_{0, j}$ and $\overline{C}^u_1 \cap h'$. Since this region contains $t$, it is not empty and such a vertex $z$ exists. If $z \in C^u_{0, j}$, it is the closest vertex in $C^u_{0, j}$. If $z \in \overline{C}^u_1$, $u$ is the closest vertex to $z$. We note that in both cases $z$ is visible to $u$, since any constraint blocking it would have an endpoint below $z$. Hence, both cases result in an edge $u z$. However, since $\angle x' u z < \angle x' u v$, this contradicts that the routing algorithm routed to $v$. Thus, $u$ cannot be the endpoint of a constraint intersecting the interior of $u v x'' x'$.

  \textbf{Case (c):} If $v$ lies in a subcone of $C^u_0$ and $u v$ crosses $s t$, let $q$ be the intersection of $u v$ and $s t$. We need to show that the triangles $u q x'$ and $q x'' v$ are empty and do not contain any constraints (see Figure~\ref{fig:MaintainingInvariant}c). The proof that $u q x'$ is empty and does not contain any constraints is analogous to the previous case. 

  We prove that $q x'' v$ is empty by contradiction, so assume that $q x'' v$ is not empty. Since $q x''$ and $q v$ are visibility edges, we can apply Lemma~\ref{lem:ConvexChain} and we obtain a vertex $y$ in $q x'' v$ that is visible from $q$. If $y$ is visible from $u$, $v$ is not the closest vertex and edge $u v$ would not exist. If $y$ is not visible from $u$, we note that $u q$ is visible and apply Lemma~\ref{lem:ConvexChain} on triangle $u y q$. This gives us a vertex $z$ that is visible to $u$ and closer to $u$ than $v$, again contradicting the existence of edge $u v$. Hence, triangle $q x'' v$ is empty. 

  Finally, we show that $q x'' v$ does not contain any constraints. Since $q x''$ and $q v$ are visibility edges and $q x'' v$ is empty, any constraint intersecting the interior of $q x'' v$ must have $q$ as an endpoint. However, since $q$ is not a vertex, it cannot be the endpoint of a constraint. 
\end{proof}

Finally, we show that the path followed by the routing algorithm is 2-competitive, with respect to the Euclidean shortest path. 

\begin{theorem}
  \label{theo:PositiveCompetitive}
  Given two vertices $s$ and $t$ in the \graph such that $s$ and $t$ can see each other and $t$ lies in a positive subcone of $s$, there exists a 1-local routing algorithm that routes from  $s$ to $t$ and is 2-competitive with respect to the Euclidean distance. 
\end{theorem}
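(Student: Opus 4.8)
The plan is to prove the bound by strong induction on the area of the canonical triangle, in the same style as Lemma~\ref{lem:PathInTriangle}, with a strengthened hypothesis. Assume without loss of generality that $t \in C^s_0$, place $s$ at the origin, and let $\canon{s}{t}$ be the equilateral triangle with apex $s$ whose top side is the horizontal segment through $t$; write $y_t$ for its height and $m = \tfrac{2}{\sqrt3}y_t$ for its side length, and note that $|st| \ge y_t$, with equality when $t$ lies on the bisector of $C^s_0$. I would first record the structure that Lemma~\ref{lem:RoutingStep} and its invariant give for free: every vertex of the routing path lies strictly below the horizontal line through $t$; the path crosses $st$ only along an edge lying in an up-cone $C^u_0$; and between two consecutive crossings the path stays weakly on one side of $st$ and, together with that stretch of $st$, bounds a region that is empty and free of constraints. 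In particular the routing path is cut by its crossing points $s = q_0, q_1, \dots, q_r, q_{r+1} = t$ into maximal one-sided pieces, and each piece together with the stretch of $st$ it spans bounds an empty region, so the canonical triangle $\canon{q_j}{t}$ strictly shrinks from one piece to the next (since $q_{j+1}$ lies on $st$ above $q_j$ and up-cones are closed under translation along their own direction). This both drives the induction and shows the algorithm terminates at $t$.

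The heart of the argument is a bound on a single one-sided piece together with the right ``potential''. A natural candidate, mirroring the unconstrained half-$\Theta_6$ analysis, is a quantity of the form (length of the top side of $\canon{p}{t}$) plus a term measuring how far $p$'s projection on that top side is from $t$; one then shows that each of the three kinds of step — an up-cone edge not crossing $st$, an up-cone edge crossing $st$, and a negative-cone edge (cases (a)--(c) of Lemma~\ref{lem:RoutingStep}) — decreases this potential by at least the length of the edge taken, using that up-cone edges make an angle of at least $60^\circ$ with the horizontal and that negative-cone edges live in the thin empty region pinned against $st$ by the invariant, so their horizontal excursion is paid for by the edges that later bring the path back to $st$. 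Because up-cone edges are cheap relative to their vertical gain ($\tfrac{2}{\sqrt3}$ per unit) and the negative-cone edges and their return detours are confined by the invariant, the accumulated cost over the whole path is at most $2y_t$, and hence at most $2|st|$; the coefficients of the potential have to be picked so that the constant is exactly $2$ and so that it degrades correctly as $t$ moves off the bisector toward a corner of $\canon{s}{t}$.

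I expect the main obstacle to be designing this potential and, above all, verifying the negative-cone case: a one-sided piece is not a convex chain monotone in the cone direction, a negative-cone edge can temporarily move the path \emph{away} from $t$, and a naive hypothesis of the form ``$\le 2|pt|$'' is not preserved by such a step. The remedy is to use the emptiness and planarity supplied by Lemma~\ref{lem:RoutingStep} to argue that a one-sided piece is ``almost convex'' — an $st$-monotone chain whose total horizontal excursion is small and is dominated by the rightward progress of its up-cone edges — and to fold exactly that excursion into the potential so that negative-cone steps are self-financing. Checking that no single edge, and no single detour of the ``leave $st$ and return'' form, can break the constant $2$ is where a case analysis paralleling cases (a)--(c) of Lemma~\ref{lem:RoutingStep} is unavoidable.
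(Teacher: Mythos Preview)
Your proposal is a plan with self-acknowledged gaps rather than a proof, and the paper's argument is both different and considerably simpler than what you are reaching for.

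The paper does not use induction on the area of $\canon{s}{t}$, nor does it design a potential. Instead it bounds each edge $v_i v_{i+1}$ of the routing path separately by the triangle inequality, replacing it by two straight legs whose directions are parallel to the sides of $\canon{s}{t}$: one leg parallel to the slanted side $sa$ (or $sb$), the other horizontal. Concretely, for an edge in $\overline{C}^{v_i}_1$ one takes the upper corner $a_i$ of $\canon{v_{i+1}}{v_i}$ and writes $|v_i v_{i+1}| \le |v_i a_i| + |a_i v_{i+1}|$; for an edge in $C^{v_i}_0$ not crossing $st$ one uses the near corner of $\canon{v_i}{v_{i+1}}$; for an edge crossing $st$ one splits at the crossing point and does the same on each half. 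The key observation is then purely geometric: within a maximal one-sided piece between consecutive crossings $x$ and $x'$, all the slanted legs are parallel with disjoint projections onto the line $sa$, and all the horizontal legs have disjoint projections onto the horizontal direction, so they telescope to $|x a_x| + |a_x x'|$ where $a_x$ is the corner of $\canon{x}{x'}$. Flipping every piece to the longer of the two sides and summing gives $\max\{|sa|+|at|,\,|sb|+|bt|\} \le 2|st|$.

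Two specific issues with your outline. First, the negative-cone step is not the hard case you anticipate: such an edge never crosses $st$ by construction of the algorithm, and the same triangle-inequality replacement applies to it verbatim; the telescoping absorbs its horizontal excursion automatically, with no separate ``self-financing'' argument needed. Second, your inductive scheme does not type-check: the portion of the routing path between consecutive crossings $q_j$ and $q_{j+1}$ is \emph{not} an instance of the algorithm routing from $q_j$ toward $t$ (the algorithm's choices at every intermediate vertex depend on the fixed pair $s,t$, and $q_j$ is typically not a vertex at all), so one cannot invoke the induction hypothesis on $\canon{q_j}{t}$. If you insist on a potential, the one that works is $\Phi(p) = |p a_p| + |a_p t|$ with $a_p$ the corner of $\canon{p}{t}$ on the side of $st$ currently containing the path; this is exactly what the paper's telescoping sums to, but the paper reaches it by summing per-edge bounds rather than by tracking $\Phi$ step by step.
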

\begin{proof}[Proof.]
  We assume without loss of generality that $t \in C^s_0$. The routing algorithm will thus only take steps in $C^{v_i}_0$, $\overline{C}^{v_i}_1$, and $\overline{C}^{v_i}_2$, where $v_i$ is an arbitrary vertex along the routing path. Let $a$ and $b$ be the upper left and right corner of \canon{s}{t}. To bound the length of the routing path, we first bound the length of each edge. We consider three cases: (a) edges in subcones of $\overline{C}^{v_i}_1$ or $\overline{C}^{v_i}_2$, (b) edges in subcones of $C^{v_i}_0$ that do not cross $s t$, (c) edges in subcones of $C^{v_i}_0$ that cross $s t$. For ease of notation we use $v_0$ and $v_k$ to denote $s$ and $t$.

  \begin{figure}[ht]
    \begin{center}
      \includegraphics{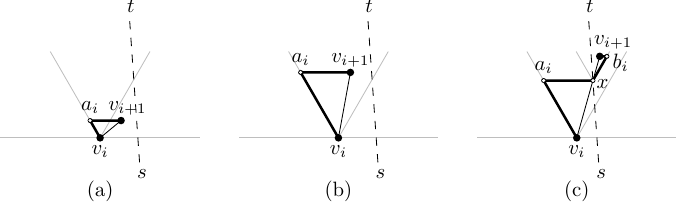}
    \end{center}
    \caption{Bounding the edge lengths: (a) an edge in a subcone of $\overline{C}^u_1$, (b) an edge in a subcone of $C^u_0$ that does not cross $s t$, and (c) an edge in a subcone of $C^u_0$ that crosses $s t$.}
    \label{fig:BoundingEdgeLength}
  \end{figure}

  \smallskip
  \textbf{Case (a):} If edge $v_i v_{i+1}$ lies in a subcone of $\overline{C}^{v_i}_1$, let $a_i$ be the upper corner of \canon{v_{i+1}}{v_i} (see Figure~\ref{fig:BoundingEdgeLength}a). By the triangle inequality, we have that $|v_i v_{i+1}| \leq |v_i a_i| + |a_i v_{i+1}|$. The case where $v_i v_{i+1}$ lies in $\overline{C}^{v_i}_2$ is analogous. 

  \textbf{Case (b):} If edge $v_i v_{i+1}$ lies in a subcone of $C^{v_i}_0$ and does not cross $s t$, let $a_i$ and $b_i$ be the upper left and right corner of \canon{v_i}{v_{i+1}} (see Figure~\ref{fig:BoundingEdgeLength}b). If $v_i$ lies to the left of $s t$, we use that $|v_i v_{i+1}| \leq |v_i a_i| + |a_i v_{i+1}|$. If $v_i$ lies to the right of $s t$, we use that $|v_i v_{i+1}| \leq |v_i b_i| + |b_i v_{i+1}|$. 

  \textbf{Case (c):} If edge $v_i v_{i+1}$ lies in a subcone of $C^{v_i}_0$ and crosses $s t$, we split it into two parts, one for each side of $s t$ (see Figure~\ref{fig:BoundingEdgeLength}c). Let $x$ be the intersection of $s t$ and $v_i v_{i+1}$. If $v_i$ lies to the left of $s t$, let $a_i$ be the upper left corner of \canon{v_i}{x} and let $b_i$ be the upper right corner of \canon{x}{v_{i+1}}. By the triangle inequality, we have that $|v_i v_{i+1}| \leq |v_i a_i| + |a_i x| + |x b_i| + |b_i v_{i+1}|$. If $v_i$ lies to the right of $s t$, let $a_i$ be the upper left corner of \canon{x}{v_{i+1}} and let $b_i$ be the upper right corner of \canon{v_i}{x}. By triangle inequality, we have that $|v_i v_{i+1}| \leq |v_i b_i| + |b_i x| + |x a_i| + |a_i v_{i+1}|$. 

  To bound the length of the full path, let $x$ and $x'$ be two consecutive points where the routing path crosses $s t$ and let $v_i v_{i+1}$ be the edge that crosses $s t$ at $x$ and let $v_{i'} v_{i'+1}$ be the edge that crosses $s t$ at $x'$. Let $a_x$ and $b_x$ be the upper left and right corner of \canon{x}{x'}. If the path between $x$ and $x'$ lies to the left of $s t$, this part of the path is bounded by: \[|x a_i| + \sum_{j=i}^{i'-1} |a_j v_{j+1}| + \sum_{j=i+1}^{i'} |v_j a_j| + |a_{i'} x'|.\] Since $x a_i$ and all $v_j a_j$ are parallel to $x a_x$ and all $a_x v_{j+1}$ are horizontal, we have that: \[|x a_i| + \sum_{j=i+1}^{i'} |v_j a_j| = |x a_x|.\] Similarly, since $a_{i'} x'$ and all $a_j v_{j+1}$ are parallel and have disjoint projections onto $a_x x'$, we have that: \[\sum_{j=i}^{i'-1} |a_j v_{j+1}| + |a_{i'} x'| = |a_x x'|.\] Thus, the length of a path to the left of $s t$ is at most: \[|x a_x| + |a_x x'|\] If the path between $x$ and $x'$ lies to the right of $s t$, this part of the path is bounded by (see Figure~\ref{fig:BoundingTotalLength}a): \[|x b_i| + \sum_{j=i}^{i'-1} |b_j v_{j+1}| + \sum_{j=i+1}^{i'} |v_j b_j| + |b_{i'} x'| = |x b_x| + |b_x x'|.\]

  \begin{figure}[ht]
    \begin{center}
      \includegraphics{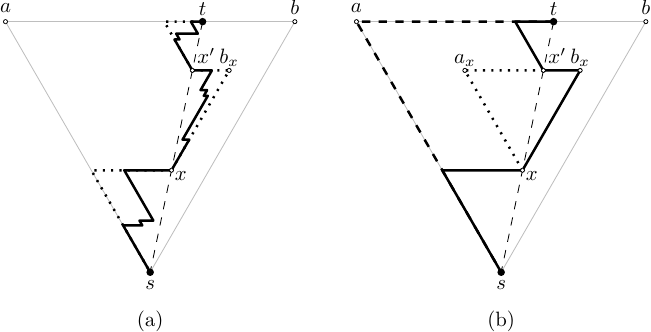}
    \end{center}
    \caption{Bounding the total length: (a) the bounds (solid lines) are unfolded (dotted lines) and (b) the unfolded bounds (solid lines) are flipped to the longer of the two sides (dotted lines) and unfolded again (dashed lines).}
    \label{fig:BoundingTotalLength}
  \end{figure}

  Next, we flip all unfolded bounds to the longer of the two sides $a t$ and $b t$: if $|a t| \geq |b t|$, we replace all bounds of the form $|x b_x| + |b_x x'|$ by $|x a_x| + |a_x x'|$ and if $|a t| < |b t|$, we replace all bounds of the form $|x a_x| + |a_x x'|$ by $|x b_x| + |b_x x'|$ (see Figure~\ref{fig:BoundingTotalLength}b). Note that this can only increase the length of the bounds. Finally, we sum these bounds and get that the sum is equal to $\max\{|s a| + |a t|, |s a| + |b t|\}$, which is at most $2 \cdot |s t|$. 
\end{proof}

\subsection[Routing on the Constrained $\Theta_6$-Graph]{Routing on the Constrained $\boldsymbol{\Theta_6}$-Graph}
To route on the constrained $\Theta_6$-graph, we split it into two constrained \graph{s}: the constrained \graph oriented as in Figure~\ref{fig:ConstrainedConesHalfGraph} and the constrained \graph where positive and negative cones are inverted. When we want to route from $s$ to $t$, we pick the constrained \graph in which $t$ lies in a positive subcone of $s$, referred to as $G^+$ in the remainder of this section, and apply the routing algorithm described in the previous section. Since this routing algorithm is 1-local and 2-competitive, we obtain a 1-local and 2-competitive routing algorithm for the constrained $\Theta_6$-graph, provided that we can determine locally, while routing, whether an edge is part of $G^+$. When at a vertex $u$, we consider the edges in order of increasing angle with the horizontal halfline through $u$ that intersects $s t$. 

\begin{lemma}
  \label{lem:EdgeInHalfThetaSix}
  While executing the positive routing algorithm for two visible vertices $s$ and $t$, we can determine locally at a vertex $u$ for any edge $u v$ in the constrained $\Theta_6$-graph whether it is part of $G^+$.  
\end{lemma}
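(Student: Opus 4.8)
Proof proposal for Lemma~\ref{lem:EdgeInHalfThetaSix}\textbf{Proof proposal for Lemma~\ref{lem:EdgeInHalfThetaSix}.}

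The plan is to show that an edge $uv$ incident to the current vertex $u$ belongs to $G^+$ precisely when $v$ lies in a positive cone of $u$ \emph{in the orientation of $G^+$}, or, symmetrically, when $u$ lies in a positive cone of $v$; and that this membership can be decided knowing only the coordinates of $u$ and $v$ together with the $1$-neighbourhood $N_1(u)$, which the routing algorithm already has. The one subtlety is that a cone of $u$ may be split into subcones by constraints incident to $u$, so an edge $uv$ with $v$ in a positive cone of $u$ need not be the edge $u$ itself contributes in that subcone — it could instead be an edge contributed by $v$ (because $u$ lies in a positive cone of $v$ in $G^+$), and it must be recognised as belonging to $G^+$ in either case. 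Conversely, an edge $uv$ with $v$ in a \emph{negative} cone of $u$ (in the $G^+$ orientation) belongs to $G^+$ iff $u$ lies in a positive cone of $v$, i.e. iff $v$ contributed it; the geometric orientation of the segment $uv$ alone does not settle this, since the other constrained \graph{} $G^-$ also has an edge in that direction in general.

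The key steps, in order. First, fix which of the two constrained \graph{s} is $G^+$: it is the one in which $t$ lies in a positive subcone of $s$, and by our general position assumptions this is determined by the direction of $st$, so it is known globally before routing starts; record the partition of the plane into the positive cones $C_0,C_1,C_2$ and negative cones $\overline C_0,\overline C_1,\overline C_2$ for $G^+$. Second, at a vertex $u$ with incident edge $uv$, compute in which cone of $u$ (in the $G^+$ orientation) the vertex $v$ lies. If $v$ lies in a positive cone $C^u_i$ of $G^+$, then the edge $uv$ is an edge of $G^+$: indeed $G^+$ was built by adding, in every positive subcone of every vertex, the edge to the closest visible vertex, and $uv$ being an edge of the constrained $\Theta_6$-graph with $v\in C^u_i$ forces $uv$ to be exactly that edge in $v$'s subcone of $u$ (general position gives at most one edge per subcone), hence $uv\in G^+$. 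If instead $v$ lies in a negative cone $\overline C^u_i$ of $G^+$, then $u$ lies in the positive cone $C^v_{i}$ of $v$ (using the duality $v\in C^u_i \iff u\in\overline C^v_i$ stated in the preliminaries), and the same argument applied with the roles of $u$ and $v$ exchanged shows $uv\in G^+$ in this case as well — \emph{provided} it really is an edge, which it is by hypothesis. Third, conclude that in fact \emph{every} edge $uv$ of the constrained $\Theta_6$-graph incident to $u$ lies in $G^+$ or in $G^-$ but the test "is $v$ in a positive cone of $u$, or is $u$ in a positive cone of $v$" — equivalently "is the segment $uv$ in a positive cone of one of its endpoints in the $G^+$ orientation" — is decidable from the coordinates of $u$ and $v$ alone, and it is satisfied exactly by the $G^+$ edges, because each constrained \graph{} is obtained by keeping only the edges directed from a vertex into its own positive cones.

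The main obstacle I expect is handling the subcone structure rather than the bare cones: I must rule out the possibility that $uv$, although lying in a positive cone $C^u_i$ of $u$, is "blocked" within its subcone by a closer visible vertex and therefore is an edge contributed by $v$ and not by $u$ — but this does not affect membership in $G^+$, since $G^+$ is the union over \emph{all} vertices of the edges they each contribute, so an edge contributed by $v$ into its positive subcone is still a $G^+$ edge; the only thing to verify carefully is that the $1$-locality is not violated, i.e. that deciding the cone of $v$ with respect to $u$ (and of $u$ with respect to $v$) needs only the two coordinates and not knowledge of constraints far from $u$, which is immediate because the cone boundaries through $u$ and through $v$ are fixed axis directions and the constraints incident to $u$ are part of $N_1(u)$. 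Finally I would note the order of inspection stated just before the lemma — edges in increasing angle with the horizontal halfline through $u$ meeting $st$ — is consistent with this test, so the positive routing algorithm of Theorem~\ref{theo:PositiveCompetitive} can be run verbatim on the constrained $\Theta_6$-graph restricted on the fly to $G^+$, giving the claimed $1$-local $2$-competitive algorithm. \qed
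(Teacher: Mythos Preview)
Your argument has a fundamental gap: as written, it classifies \emph{every} edge of the constrained $\Theta_6$-graph as belonging to $G^+$. By the duality $v\in C^u_i \Leftrightarrow u\in\overline C^v_i$, for any pair $u,v$ exactly one of ``$v$ lies in a positive cone of $u$'' and ``$u$ lies in a positive cone of $v$'' (in the $G^+$ orientation) holds, so your disjunctive test is always satisfied. The error is the claim that ``$uv$ being an edge of the constrained $\Theta_6$-graph with $v\in C^u_i$ forces $uv$ to be exactly the edge $u$ contributes in that subcone.'' This is false: there can be several $\Theta_6$-edges incident to $u$ inside one positive subcone of $u$; only the one to the closest visible vertex is contributed by $u$ and hence lies in $G^+$. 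Any other such edge is contributed by its far endpoint $v$, for whom $u\in\overline C^v_i$ is a \emph{positive} cone in the $G^-$ orientation, so that edge lies only in $G^-$.

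The paper's proof is genuinely more delicate, and this is why the hypothesis ``while executing the positive routing algorithm'' appears in the statement. When $v$ lies in a positive subcone of $u$, one checks whether $v$ is the closest visible vertex there (this is $1$-local at $u$). When $v$ lies in a negative subcone of $u$, the edge is certainly in $G^+$ if $v$ is \emph{not} the closest vertex in that subcone; the hard case is when $v$ \emph{is} the closest. Then $uv$ is in $G^-$, and one must decide whether it is also in $G^+$, i.e.\ whether $u$ is the closest visible vertex to $v$ in the positive subcone $C^v_{i,j}$ containing $u$. That is a question about $v$'s neighbourhood, not $u$'s, and cannot be answered from coordinates alone. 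The paper resolves it by invoking the empty-region invariant maintained along the routing path (Lemma~\ref{lem:RoutingStep}) together with the order in which edges are examined: the invariant forces one of the three pieces of $\canon{v}{u}$ around $u$ to be empty, reducing the question to whether $uv$ is a constraint or whether the closest visible vertex in the adjacent positive subcone of $u$ falls inside $\canon{v}{u}$ --- and that \emph{is} $1$-local at $u$. Without the routing invariant this decision is not $1$-local, so a purely coordinate-based test of the kind you propose cannot work.
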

\begin{proof}[Proof.]
  Suppose we color the edges of the constrained $\Theta_6$-graph red and blue such that red edges form $G^+$ and blue edges form the constrained \graph, where $t$ lies in a negative subcone of $s$. At a vertex $u$, we need to determine locally whether $u v$ is red. Since an edge can be part of both constrained \graph{s}, it can be red and blue at the same time. This makes it harder to determine whether an edge is red, since determining that it is blue does not imply that it is not red. 

  If $v$ lies in a positive subcone of $u$, we need to determine if it is the closest vertex in that subcone. Since by construction of the constrained \graph, $u$ is connected to the closest vertex in this subcone, it suffices to check whether this vertex is $v$. Note that if $u v$ is a constraint, $v$ lies in two subcones of $u$ and hence we need to check if it is the closest vertex in at least one of these subcones. 

  If $v$ lies in a negative subcone of $u$, we know that if it is not the closest visible vertex in that subcone, $u v$ is red. Hence, it remains to determine whether the edge to the closest vertex is red: If it is the closest visible vertex, it is blue, but it may be red as well if $u$ is also the closest visible vertex to $v$. Hence, we need to determine whether $u$ is the closest vertex in $C_{i, j}^v$, a subcone of $v$ that contains $u$. We consider two cases: (a) $u v$ is a constraint, (b) $u v$ is not a constraint. 

  \textbf{Case (a):} Since $u v$ is a constraint, we know that it cannot cross $s t$. Since we are considering $u v$, we also know that all edges that make a smaller angle with the horizontal halfline through $u$ that intersect $s t$ are not red. Hence, $u v$ is either part of the boundary of the routing path or the constraint is contained in the interior of the region bounded by the routing path and $s t$. However, by the invariant of Lemma~\ref{lem:RoutingStep}, the region bounded by the routing path and $s t$ does not contain any constraints in its interior. Thus, $u v$ is part of the boundary of the routing path and $u v$ is red. 

  \textbf{Case (b):} If $u v$ is not a constraint, let regions $A$ and $B$ be the intersection of $C_i^v$ and the two subcones of $u$ adjacent to $\overline{C}_i^u$ and let $C$ be the intersection of $C_{i, j}^v$ and the negative subcone of $u$ that contains $v$ (see Figure~\ref{fig:EdgeInHalfThetaSix}). We first note that since $u v$ lies in a negative subcone of $u$, the invariant of Lemma~\ref{lem:RoutingStep} implies that $B$ is empty. Furthermore, since $v$ is the closest visible vertex to $u$, $C$ does not contain any vertices that can see $u$ or $v$. 

  \begin{figure}[ht]
    \begin{center}
      \includegraphics[scale=0.8]{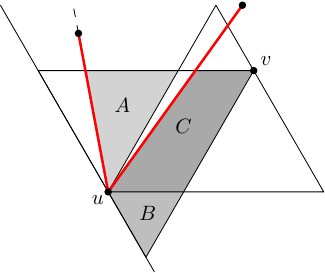}
    \end{center}
    \caption{The three regions $A$, $B$, and $C$ when determining whether an edge is part of the constrained \graph.}
    \label{fig:EdgeInHalfThetaSix}
  \end{figure}

  Since $C$ does not contain any vertices that can see $u$ or $v$, any constraint in $\overline{C}_i^u$ that has $u$ as an endpoint and lies above $u v$, ensures that $v$ cannot see $A$, i.e. it cannot block visibility of this region only partially. Hence, if such a constraint exists, $u$ is the closest visible vertex to $v$ in $C_{i, j}^v$, since neither $B$ nor $C$ contain any vertices visible to $v$. Therefore, $u v$ is red.  

  If $v$ can see $A$, we show that $u v$ is red if and only if the closest visible vertex in the subcone of $u$ that contains $A$ does not lie in $A$. We first show that 
if the closest visible vertex $x$ in the subcone of $u$ that contains $A$ lies in $A$, then $u v$ is not red. Since $A$ is visible to $v$, $u$ is not the endpoint of a constraint in $\overline{C}_i^u$ above $u v$. Hence, we have two visibility edges $u v$ and $u x$ and $u$ is not the endpoint of a constraint intersecting the interior of triangle $u x v$. Therefore, by Lemma~\ref{lem:ConvexChain}, we have a convex chain of visibility vertices between $x$ and $v$. Let $y$ be the vertex adjacent to $v$ along this chain. Since the polygon defined by $u x$, $u v$, and the convex chain is empty and does not contain any constraints, $y$ lies in $C_{i, j}^v$. Thus, $u$ is not the closest visible vertex in $C_{i, j}^v$ and $u v$ is not red. 

  Next, we show that if the closest visible vertex $x$ in the subcone of $u$ that contains $A$ does not lie in $A$, then $u v$ is red. We prove this by contradiction, so assume that $u v$ is not red. This implies that there exists a vertex $y \in C_{i, j}^v$ that is visible to $v$ and closer than $u$. Since $B$ is empty and $C$ does not contain any vertices that can see $v$, $y$ lies in $A$. Since $u v$ and $v y$ are visibility edges and $v$ is not the endpoint of a constraint intersecting the interior of triangle $u y v$, by Lemma~\ref{lem:ConvexChain} there exists a convex chain of visibility edges between $u$ and $y$. Furthermore, since $C$ does not contain any vertices that can see $u$, the vertex adjacent to $u$ along this chain lies in $A$. Since any vertex in $A$ is closer to $u$ than $x$, this leads to a contradiction, completing the proof. 
\end{proof}

\noindent \textbf{Routing Algorithm for the Constrained $\boldsymbol{\Theta_6}$-Graph} \\
Hence, to route on the constrained $\Theta_6$-graph, we apply the positive routing algorithm on $G^+$, while determining which edges are part of this constrained \graph. The latter can be determined as follows: If $v$ lies in a positive subcone, we need to check whether it is the closest vertex in that subcone. If $v$ lies in a negative subcone and it is not the closest vertex, it is part of the constrained \graph. Finally, if $v$ is the closest vertex in a negative subcone, it is part of the constrained \graph if it is a constraint or the intersection of the cone of $v$ that contains $u$ and the subcone of $C_{i-1}^u$ adjacent to $\overline{C}_i^u$ is empty.

\subsection[Negative Routing on the Constrained \graph]{Negative Routing on the Constrained Half-$\boldsymbol{\Theta_6}$-Graph}
We note that the routing algorithm provided in the previous section does not suffice to also route on the constrained \graph, since it assumes that the destination lies in a positive subcone of the source. Therefore, in this section, we provide an $O(1)$-memory 1-local routing algorithm for the case where the destination $t$ lies in a negative subcone of the source $s$. 

For ease of exposition, we assume that $t$ lies in a subcone of $\overline{C}^s_0$. The $O(1)$-memory 1-local routing algorithm finds a path from $s$ to $t$ of length at most $2 \cdot |s t|$ and travels a total distance of at most $18 \cdot |s t|$ to do so. We note that negative routing is harder than positive routing, since there need not be an edge to a vertex in the cone of $s$ that contains $t$. This phenomenon also caused the separation between spanning ratio and routing ratio in the unconstrained setting~\cite{BFRV2015RoutingJournal}. 

The remainder of this section is structured as follows: First, we identify a set of conditions that edges need to meet in order to be considered by the routing algorithm. Unfortunately, one of these conditions cannot be checked 1-locally. Therefore, we replace it with a set of conditions that exclude edges that are guaranteed not to satisfy the original condition and can be checked 1-locally. 

We proceed to describe the edges considered by the negative routing algorithm. Given a vertex $v$ and all neighbors of $v$ whose projection along the bisector of $C^t_0$ is closer to $t$ than the projection of $v$, we number the neighbors $u_0, ..., u_k$ of $v$ in counterclockwise order, starting from the horizontal half-line to the left of $v$ (see Figure~\ref{fig:LastRegion}). We create $k+2$ regions around $v$:
\begin{itemize}
  \item We create $k$ triangular regions $v u_i u_{i+1}$ for $0 \leq i \leq k-1$.
  \item We create one unbounded region using edge $v u_0$ and the two horizontal half-lines starting at $v$ and $u_0$ directed towards the left. 
  \item We create one unbounded region using edge $v u_k$ and the two horizontal half-lines starting at $v$ and $u_k$ directed towards the right. 
\end{itemize}

\begin{figure}[ht]
  \begin{center}
    \includegraphics{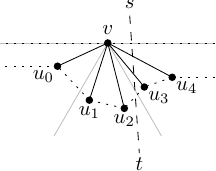}
  \end{center}
  \caption{Triangle $v u_2 u_3$ is the last region of $v$ intersected by $s t$.}
  \label{fig:LastRegion}
\end{figure}

The \emph{last region} of $v$ intersected by $s t$ is defined as the last of these regions that is encountered when following $s t$ from $s$ to $t$. In Figure~\ref{fig:LastRegion}, the region defined by $v$, $u_2$, and $u_3$ is the last region of $v$ intersected by $s t$.

We consider an edge $u v$ for our routing algorithm when it satisfies the following three conditions: 
\begin{enumerate}
  \item Vertices $u$ and $v$ lie inside or on the boundary of \canon{t}{s}. 
  \item Edge $u v$ is part of the last region of $v$ that is intersected by $s t$. 
  \item Edge $u v$ is the edge that the positive routing algorithm picks at $u$ when routing from $t$ to $s$. Note that for this condition, we do not require that $u$ is part of the positive routing path, but only that should the positive routing path reach $u$, edge $u v$ is the edge it would select for its next step. 
\end{enumerate}
Given $s$ and $t$, the first two requirements can be checked using only the location of $s$ and $t$ and 1-local information, i.e. the neighbors of the current vertex. The last requirement, on the other hand, may need 2-local information as it involves the neighbors of the neighbors of $v$. Hence, instead of using this last requirement, we ignore the edges that can never satisfy it and show that we can route competitively and 1-locally on the graph $G$ formed by the edges that meet the first two requirements. 

Since $t$ lies in a subcone of $\overline{C}^s_0$, the edges that define the last intersected region of a vertex $v$ can lie in three cones: $C^v_1$, $\overline{C}^v_0$, and $C^v_2$. Since edges in $C^v_1$ and $C^v_2$ of the negative routing algorithm correspond to edges in $\overline{C}^u_1$ and $\overline{C}^u_2$ of the positive routing algorithm (applied from t to s), the positive routing algorithm never follows these edges if they intersect $s t$. Hence, these edges need not be considered by the negative routing algorithm (see Figure~\ref{fig:IgnoredEdges}a). 

\begin{figure}[ht]
  \begin{center}
    \includegraphics{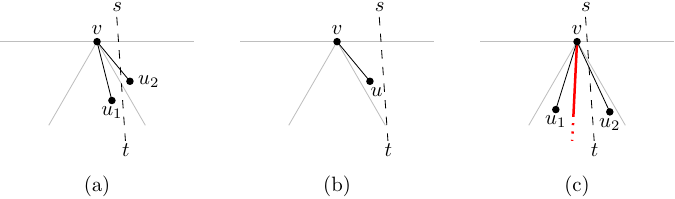}
  \end{center}
  \caption{The edges ignored by the negative routing algorithm: (a) edge $u_2 v$ is ignored since it intersects $s t$, (b) edge $u v$ is ignored since $C^v_2$ is intersected by $s t$, (c) edge $u_1 v$ is ignored since it lies in a subcone that is not intersected by $s t$ and $u_1 v u_2$ is intersected by a constraint that has $v$ as an endpoint.}
  \label{fig:IgnoredEdges}
\end{figure}

We also do not need to consider edges in $C^v_1$ and $C^v_2$ when that cone is intersected by $s t$ (see Figure~\ref{fig:IgnoredEdges}b): Assume $C^v_1$ is intersected by $s t$. Since we are considering edge $u v$, it cannot cross $s t$. Hence, $s t$ intersects cone $C^u_1$, but more importantly $s t$ intersects $\overline{C}^u_2$. Hence, if the positive routing algorithm reaches $u$, it continues by following an edge in $\overline{C}^u_2$ or $C^u_0$. Since $C^v_1$ corresponds to $\overline{C}^u_1$, no edge in this cone is followed by the positive routing algorithm, and we can ignore it. 

Finally, we ignore edges that lie in a subcone that is not intersected by $s t$ when $v$ is the endpoint of a constraint that intersects the interior of the last region of $v$ that is intersected by $s t$ (see Figure~\ref{fig:IgnoredEdges}c): If $v$ is the endpoint of a constraint that intersects the interior of the last region of $v$ that is intersected by $s t$, we do not consider the edge that is not intersected by $s t$. We can ignore this edge, since by the invariant, the region between the routing path and $s t$ does not contain any constraints. 

Since these conditions can be checked using only $s$, $t$, $v$, the neighbors of $v$, and the constraints incident to $v$, we can determine 1-locally whether to consider an edge. Hence, the graph $G$ on which we route is the graph formed by all edges $u v$ such that: 
\begin{enumerate}
  \item Vertices $u$ and $v$ lie inside or on the boundary of \canon{t}{s}. 
  \item Edge $u v$ is part of the last region of $v$ that is intersected by $s t$. 
  \item Edge $u v$ does not meet any of the following three conditions: 
	\begin{enumerate}
	  \item \label{cond:3a} Edge $u v$ lies in $C^v_1$ or $C^v_2$ and crosses $s t$. 
	  \item \label{cond:3b} Edge $u v$ lies in $C^v_1$ or $C^v_2$ and this cone is intersected by $s t$.
	  \item Edge $u v$ lies in a subcone that is not intersected by $s t$ and $v$ is the endpoint of a constraint that intersects the interior of the last region of $v$ that is intersected by $s t$. 
	\end{enumerate}
\end{enumerate}

Note that every edge $u v$ that lies in $C^v_1$ or $C^v_2$ and crosses $s t$, the cone that contains $u v$ is intersected by $s t$. Hence, condition~\ref{cond:3a} can be ignored as it is included in condition~\ref{cond:3b}.  

In the remainder of this section, for ease of exposition, we consider each edge of $G$ to be oriented upward: Let $u'$ and $v'$ be the projections of $u$ and $v$ along the bisector of $C^t_0$. Edge $u v$ is oriented from $u$ to $v$ if and only if $|t u'| \leq |t v'|$. Note that this does not imply that $u$ lies in a negative cone of $v$. We proceed to prove that every vertex with two incoming edges is part of the positive routing path when routing from $t$ to $s$. 

\begin{lemma}
  Every vertex with in-degree 2 in $G$ that is reached by the negative routing algorithm is part of the positive routing path from $t$ to $s$. 
\end{lemma}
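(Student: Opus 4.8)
The plan is to first pin down the local picture at an in‑degree‑$2$ vertex, then single out one incoming edge that the positive routing algorithm from $t$ to $s$ is forced to use, and finally propagate this down the graph to $t$.

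\emph{Local structure of in‑degree‑$2$ vertices.} By the definition of $G$, every edge incident to $w$ that is oriented \emph{into} $w$ (i.e.\ from a neighbour whose projection on the bisector of $C^t_0$ is closer to $t$) is one of the edges of the last region of $w$ intersected by $st$. An unbounded region contributes exactly one such edge to $w$, while a bounded triangular region $w\,u_i\,u_{i+1}$ contributes two. Hence $w$ has in‑degree $2$ in $G$ precisely when its last region intersected by $st$ is a triangle $R=w\,u_i\,u_{i+1}$ with $u_i,u_{i+1}$ both closer to $t$ than $w$, and both $w u_i$ and $w u_{i+1}$ survive the exclusion rules. I would first record how $st$ meets $R$: since $s$ lies on the $\overline C^w_0$ side of $w$ and $R$ is the \emph{last} region of $w$'s fan met by $st$, the segment $st$ enters $R$ near $w$, crossing one of $w u_i$, $w u_{i+1}$ (or passing through $w$), and leaves $R$ on the side facing $t$; this partitions $R$ into a part containing $u_i$ and a part containing $u_{i+1}$, one of which is crossed by (the $t$‑side end of) $st$.

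\emph{Identifying the positive‑routing edge.} Running positive routing from $t$ to $s$ is legitimate because $t\in\overline C^s_0$ implies $s\in C^t_0$. Let $a\in\{u_i,u_{i+1}\}$ be the corner of $R$ on the side of $st$ that positive routing occupies just before it would reach the level of $w$. The cone of $a$ that contains $w$ is one of $C^a_0,\overline C^a_1,\overline C^a_2$, which are exactly the cones inspected by positive routing from $t$; and the exclusion rules deleted from $G$ precisely the incoming edges lying in an $\overline C$‑cone that crosses $st$ or whose $\overline C$‑cone is not the one met by $st$, together with the edge cut off by a constraint at $w$ — and the latter cannot lie between $st$ and the positive routing path by the invariant of Lemma~\ref{lem:RoutingStep}. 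Thus $wa$ is not one of the edges positive routing provably ignores. To conclude that the angle‑minimisation rule at $a$ actually selects $wa$, I would use Lemma~\ref{lem:PathInTriangle} together with the emptiness of (the relevant part of) $R$, which follows as in the proofs of Lemmas~\ref{lem:RoutingStep} and \ref{lem:EdgeInHalfThetaSix} from $u_i,u_{i+1}$ being consecutive lower neighbours of $w$ and an application of Lemma~\ref{lem:ConvexChain}, to rule out any edge of $a$ making a smaller angle than $wa$ with the reference direction.

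\emph{Propagation to $t$.} Finally I would argue by induction, ordering the in‑degree‑$2$ vertices reached by the negative routing algorithm by projection distance to $t$. For the smallest one, $a$ from the previous step equals $t$, and one checks directly that the first step of positive routing from $t$ — to the closest vertex in the subcone containing $s$ — is the edge $wa$, so $w$ lies on the positive routing path. For a larger $w$, the endpoint $a$ of the selected edge $wa$ is again a vertex reached by the negative routing algorithm and has in‑degree $2$ in $G$, so by the induction hypothesis $a$ is on the positive routing path; since positive routing is deterministic and, by the previous paragraph, takes edge $wa$ at $a$, it then reaches $w$. The main obstacle I expect here is exactly this propagation: verifying that $a$ is still reached by the negative routing algorithm and still has in‑degree $2$ — i.e.\ that the negative routing algorithm does not skip the stretch of $st$ near $a$ — which will require careful use of the definition of the \emph{last region intersected by $st$} and of the invariant bounding the region between the routing path and $st$.
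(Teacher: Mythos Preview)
Your local analysis is on the right track, but the propagation step is where the argument breaks down. You want to induct on projection distance and claim that the neighbour $a\in\{u_i,u_{i+1}\}$ selected by positive routing is itself reached by the negative routing algorithm \emph{and} has in-degree~$2$ in $G$. Neither of these is justified, and the second is in general false: a vertex on the positive routing path may perfectly well have in-degree~$1$ in $G$ (indeed, the algorithm description explicitly treats in-degree-$1$ vertices as intermediate points along the positive path). Your base case has the same problem: for the in-degree-$2$ vertex closest to~$t$ there is no reason why its lower neighbour~$a$ should equal~$t$.

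The paper sidesteps induction entirely with a global argument. The key observation you noted but did not exploit is that the two lower neighbours $u$ and $w$ of $v$ lie on \emph{opposite sides} of $st$. The positive routing path from $t$ to $s$ exists a~priori and, by the invariant of Lemma~\ref{lem:RoutingStep}, the region between that path and $st$ is empty; hence the path must pass through at least one of $u,w$. It then suffices to show that from either of them positive routing proceeds to~$v$. For the higher neighbour~$w$ this is the angle-minimisation argument you sketched, proved by showing (via Lemma~\ref{lem:ConvexChain} and the consecutiveness of $u,w$ around~$v$) that the triangle $vwx$ is empty. For the lower neighbour~$u$ there is an extra twist you are missing: positive routing at~$u$ might not choose $uv$ directly, but then planarity of the constrained \graph, monotonicity of positive routing, and the emptiness invariant force the path to reach~$w$ before~$s$, whence the previous case applies and~$v$ is still reached.
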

\begin{proof}[Proof.]
  Let $v$ be a vertex of in-degree 2 that is reached by the negative routing algorithm. Let $u$ and $w$ be the other endpoints of these edges to $v$, such that the projection of $u$ along the bisector of \canon{t}{s} is closer to $t$ than the projection of $w$ (see Figure~\ref{fig:InDegree2}). Since both $u v$ and $w v$ are part of the last intersected region of $v$, vertices $u$ and $w$ must lie on opposite sides of $s t$. This implies that the positive routing algorithm reaches at least one of them when routing from $t$ to $s$, since by the invariant the region between the routing path and $s t$ is empty. Thus it suffices to show that from both $u$ and $w$ the positive routing algorithm eventually reaches $v$. 

  \begin{figure}[ht]
    \begin{center}
      \includegraphics{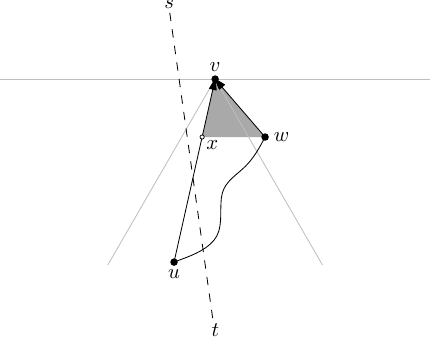}
    \end{center}
    \caption{Vertex $v$ has in-degree 2.}
    \label{fig:InDegree2}
  \end{figure}

  If the positive routing algorithm reaches $w$, we show that it would follow the edge to $v$. Let $x$ be the intersection of $u v$ and the horizontal line through $w$ (see Figure~\ref{fig:InDegree2}). First, we show that triangle $v w x$ is empty. If $w$ lies in a subcone of $C^v_1$ or $C^v_2$, $u$ lies in a subcone of $\overline{C}^v_0$, since otherwise one of the two edges would cross $s t$ and be ignored. Since $v w$ and $v x$ are visibility edges and $v$ is not the endpoint of a constraint intersecting the interior of $v w x$, it follows from Lemma~\ref{lem:ConvexChain} that if $v w x$ is not empty, there exists a convex chain of visibility edges between $w$ and $x$ and the region bounded by this chain, $v w$, and $v x$ is empty. Let $y$ be the topmost vertex along this convex chain and note that $y$ is visible to $v$. If $y$ lies in the same cone of $v$ as $w$, it also lies in the same subcone of $v$ as $w$, since $v$ is not the endpoint of a constraint intersecting the interior of $v w x$. However, this implies that $w$ is not the closest visible vertex to $v$ in this subcone, contradicting that $v w$ is an edge. If $y$ lies in $\overline{C}^v_0$, $y$ has an edge in its subcone that contains $v$, since $v$ is a visible vertex in that subcone. This edge cannot cross $v w$ and $v u$, since the constrained \graph is plane, and it cannot be connected to a vertex in the region bounded by the convex chain, $v w$, and $v x$, since it is empty. Finally, since $y$ is the topmost vertex along the convex chain, the edge cannot connect $y$ to another vertex of the convex chain. Hence, $y$ would have an edge to $v$, contradicting that $v u$ and $v w$ are consecutive edges around $v$. We conclude that triangle $v w x$ is empty. 

  Using an analogous argument, it can be shown that if $u$ lies in a subcone of $C^v_1$ or $C^v_2$, $w$ lies in $\overline{C}^v_0$ and the existence of a vertex in $v w x$ would contradict that $u v$ is an edge or that $u$ and $w$ are consecutive edges around $v$. If both $u$ and $w$ lie in a subcone of $\overline{C}^v_0$, the argument reduces to the case where $y$ lies in $\overline{C}^v_0$, again contradicting that $u$ and $w$ are consecutive edges around $v$. Hence, since $v w x$ is empty, the positive routing algorithm routes to $v$ when it reaches $w$, since it minimizes angle $\angle x w v$. 

  Next, we look at the case where the positive routing path reaches $u$. If it follows edge $u v$, we are done. If it does not follow edge $u v$, let $z$ be the other endpoint of the edge the positive routing algorithm follows at $u$. By construction of the positive routing path, we know that the projection of $z$ on the bisector of $C^t_0$ lies further from $t$ than the projection of $u$. Since the constrained \graph is plane, the path from $z$ to $s$ cannot cross $u v$ or $w v$, and since the positive routing path is monotone with respect to the bisector of $C^t_0$, it cannot go down and around or through $u$. Furthermore, since the region enclosed by the positive routing path and $s t$ is empty, the path also cannot go around $w$ without passing through $w$. Finally, since $u v$ and $w v$ are consecutive edges around $v$, the path from $z$ to $s$ cannot reach $v$ by arriving from an edge between $u v$ and $w v$. Hence, $w$ must lie on the path from $z$ to $s$. Thus, since we previously showed that when the positive routing algorithm reaches $w$, it routes to $v$, vertex $v$ is also reached when the positive routing path reaches $u$. 
\end{proof}

\noindent \textbf{Negative Routing Algorithm for the Constrained Half-$\boldsymbol{\Theta_6}$-Graph} \\
Routing from $s$ to $t$ now comes down to searching for a vertex that has in-degree 2 on one of the two paths leaving $s$. When such a vertex $v$ is found, we need to find the next vertex that has in-degree 2 on one of the two paths leaving $v$. This process is repeated until we reach $t$. A single instance of this problem, i.e. finding the next vertex has in-degree 2 from another vertex can be viewed as searching for a specific point on a line. This problem has been studied extensively and a search strategy that is 9-competitive was presented by Baeza-Yates~\etal~\cite{BCR93}: We start by following the shorter of the two edges connected to $s$ and call this distance 1. If we reached a vertex with in-degree 2, we are done. Otherwise, we go back to $s$ and follow the other path up to distance 2 from $s$. Again, if we reached a vertex with in-degree 2, we are done. Otherwise, we go back to $s$ and follow the first path up to distance 4 from $s$. This process of backtracking and doubling the allowed travel distance is repeated until a vertex with in-degree 2 is reached. Since this strategy needs to keep track of the distance traveled, it uses $O(1)$-memory. Hence, we apply this search strategy and perform the following actions when we reach an unvisited vertex $v$:
\begin{itemize}
  \item If $v$ has in-degree 2, $v$ is part of the positive routing path and we restart the searching strategy from $v$. 
  \item If $v$ has in-degree 1, we proceed to its neighbor $u$ if we have enough budget left to traverse the edge. At $u$ we check whether the positive routing algorithm would follow edge $u v$. If this is not the case, we know that $v$ was a dead end and the path on the opposite side of $s t$ is part of the positive routing path. Hence, we backtrack and follow the path on the opposite side of $s t$ to the last visited vertex on that side. 
  \item If $v$ has in-degree 0, it is a dead end and we backtrack like in the previous case. 
\end{itemize}

We conclude this section by showing that the above $O(1)$-memory 1-local routing algorithm has a routing ratio of at most 9 times the length of the positive routing path, which implies an 18-competitive 1-local routing algorithm for negative routing in the constrained \graph. 

\begin{theorem}
  There exists an $O(1)$-memory 1-local 18-competitive routing algorithm for negative routing in the constrained \graph between vertices that can see each other. 
\end{theorem}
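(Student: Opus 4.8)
The plan is to factor the argument through two ingredients: the positive routing path from $t$ to $s$, whose length is at most $2|st|$ by Theorem~\ref{theo:PositiveCompetitive}, and the $9$-competitive linear-search (``lost cow'') strategy of Baeza-Yates~\etal~\cite{BCR93}, which bounds the total travel of the negative routing algorithm in terms of that path. First I would note that the positive routing algorithm is applicable from $t$ to $s$: since $t$ lies in a subcone of $\overline{C}^s_0$, the relation $v \in C^u_i \iff u \in \overline{C}^v_i$ gives $s \in C^t_0$, and $s$ and $t$ see each other, so Theorem~\ref{theo:PositiveCompetitive} (with source $t$ and destination $s$) yields a positive routing path $\Pi$ from $t$ to $s$ of length at most $2|ts| = 2|st|$. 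I would also record that every edge of $\Pi$ is an edge of $G$: $\Pi$ stays inside \canon{t}{s}, and each of its edges is by definition an edge the positive routing algorithm follows, hence it satisfies the first two defining conditions of $G$ and fails all three exclusion conditions, since those were chosen precisely to discard only edges the positive routing algorithm provably never follows.

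Next I would fix the segment decomposition. Orienting the edges of $G$ as in the text (towards increasing distance from $t$ along the bisector of $C^t_0$), let $s = b_0, b_1, \dots, b_\ell = t$ list, in the order $\Pi$ visits them from $s$ to $t$, the vertices on $\Pi$ of in-degree $2$ in $G$, preceded by $s$ and followed by $t$; write $\Pi_i$ for the sub-path of $\Pi$ between $b_{i-1}$ and $b_i$, so $\sum_{i=1}^{\ell}|\Pi_i| = |\Pi| \le 2|st|$. The core of the argument is a single-phase claim, for a phase starting at $b = b_{i-1}$: one of the (at most two) paths the algorithm can follow out of $b$ towards $t$ (by repeatedly following incoming $G$-edges) is exactly $\Pi_i$ reversed, so it reaches $b_i$ --- or $t$, in the last phase --- after travelling exactly $|\Pi_i|$; this is detected correctly because the interior vertices of $\Pi_i$ have in-degree $1$ in $G$, their unique incoming edge is the $\Pi$-edge just shown to lie in $G$, and the ``does the positive routing algorithm follow $uv$'' test succeeds on every $\Pi$-edge by definition; and any other path either re-converges on the same target or runs into a dead end (an in-degree-$0$ vertex, or a vertex whose incoming edge the positive routing algorithm would not follow), the dead end being forced by the emptiness-and-no-constraint invariant of Lemma~\ref{lem:RoutingStep} for the region bounded by $\Pi$ and $st$. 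The preceding lemma guarantees that any in-degree-$2$ vertex the search encounters lies on $\Pi$, so the phases chain up correctly and the algorithm terminates at $t$.

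Granting the claim, each phase is an instance of searching two rays for a target lying at distance $|\Pi_i|$ along at least one of them; the extra structure (the other ray blocked nearer than $|\Pi_i|$, or the target present on both rays) only decreases the cost, so the strategy of \cite{BCR93} traverses at most $9|\Pi_i|$ in that phase, using only $O(1)$ memory (the current doubling budget, the phase's starting vertex, and a constant amount of backtracking bookkeeping). Summing over the $\ell$ phases, the total distance travelled from $s$ to $t$ is at most $\sum_{i=1}^{\ell} 9|\Pi_i| = 9|\Pi| \le 18|st|$; since $s$ and $t$ see each other, the Euclidean shortest path between them has length $|st|$, so the routing ratio is at most $18$. (Discarding the backtracked excursions leaves $\Pi$ reversed as the committed $s$--$t$ path, of length at most $2|st|$, matching the path-length bound announced at the start of the section.)

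I expect the main obstacle to be the single-phase claim: showing that out of every base vertex exactly one search direction traces $\Pi$ to the next in-degree-$2$ vertex (or to $t$) and that every other direction either re-converges on that same target or is correctly diagnosed as a dead end within $O(1)$ memory --- in particular, controlling how the side paths interact with crossings of $st$, with the ``positive routing would follow $uv$'' test, and with the emptiness invariant, so that no spurious in-degree-$2$ vertex and no infinite walk can occur. Once that structural fact is in place, the competitive bound is immediate from the arithmetic $9 \times 2 = 18$.
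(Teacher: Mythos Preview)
Your outline matches the paper's approach: bound the positive routing path $\Pi$ from $t$ to $s$ by $2|st|$ via Theorem~\ref{theo:PositiveCompetitive}, segment the negative walk into phases anchored at vertices of $\Pi$, charge each phase $9$ times its $\Pi$-length via the doubling search of \cite{BCR93}, and sum to $18|st|$. The observation that every edge of $\Pi$ survives into $G$, and the use of the preceding lemma to pin every in-degree-$2$ vertex encountered onto $\Pi$, are both exactly what the paper relies on.

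The gap is precisely where you locate it, but the paper's fix differs from your sketch. You assert that a blocked ray ``only decreases the cost'' of the lost-cow search; however, after detecting a dead end the algorithm does \emph{not} simply resume doubling --- it backtracks to the phase base $p$ and walks the opposite side only as far as the last previously visited vertex $w$ there. The paper then argues that $w$ has in-degree exactly $1$ (it lies on $\Pi$, so not $0$; the search did not restart at $w$, so not $2$), hence $w$'s unique incoming $G$-edge leads to a vertex $q$ that is also on $\Pi$. Because the doubling schedule guarantees the dead-end excursion had length at most twice the $\Pi$-distance from $p$ to $q$, the Baeza-Yates arithmetic still yields the factor $9$ against that distance. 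This is the missing ingredient: without it, nothing bounds the dead-end side by $9|\Pi_i|$, since the dead end may sit farther from $p$ than you have yet explored on the $\Pi$ side, and ``$w$ is too close to $p$'' (the paper's phrase) is exactly the obstruction your one-line dismissal hides. Once you supply this argument, your telescoping over the $b_i$ goes through as written.
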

\begin{proof}[Proof.]
  Let $p$ be the last vertex where the search strategy was restarted --- initially $p$ is $s$. We prove the theorem by showing that when we restart the search strategy at vertex $q$, we traveled at most 9 times the distance along the positive routing path between $p$ and $q$. If we restart the search strategy because we reached a vertex of in-degree 2, this follows directly from the fact that the search strategy is 9-competitive, i.e. we found the vertex we are looking for and we spent at most 9 times the distance along the positive routing path between $p$ and $q$. 

  If we reach a vertex $v$ with in-degree 0 or we traverse an edge $v u$ and the positive routing algorithm would not have routed from $u$ to $v$, we backtrack to $p$ and traverse the path on the opposite side of $s t$. We follow this path until we reach $w$, the last vertex traversed on this side of $s t$. Unfortunately, $w$ is too close to $p$ to prove that the total length traveled is at most 9 times the distance along the positive routing path between $p$ and $w$. However, $w$ must have in-degree 1: Since $w$ is part of the positive routing path, it cannot have in-degree 0, and since we did not restart the search strategy when we reached $w$ the previous time, it cannot have in-degree 2. Hence, it has in-degree 1 and it follows that the vertex $q$ to which $w$ is connected is also part of the positive routing path. Since the distance along the positive routing path between $p$ and $v$ is at most 2 times the distance along the positive routing path between $p$ and $q$, an argument analogous to the one used by Baeza-Yates~\etal~\cite{BCR93} shows that we traversed at most 9 times the distance along the positive routing path between $p$ and $q$ to reach $q$. 
\end{proof}

\subsection{Lower Bound on the Negative Routing Algorithm}
In this section we show that the negative routing algorithm described in the previous section cannot be guaranteed to reach $t$ while traveling less than $2 \sqrt{39} \cdot |s t| \approx 12.48 \cdot |s t|$. This situation is shown in Figure~\ref{fig:SimplifiedWorst}: We place a vertex $r_1$ almost horizontally to the right of $s$ at distance 1, followed by a vertex $l_1$ almost horizontally to the left of $s$ at distance 2, followed by a vertex $r_2$ almost horizontally to the right of $s$ at distance 4. Once we reach the corners of \canon{t}{s} at $l_2$ and $r_3$, we proceed down along the boundary of \canon{t}{s} and place vertices $l_3$ and $r_4$ such that the distance between $s$ and $l_3$ via $l_2$ is 8 and the distance between $s$ and $r_4$ via $r_3$ is 16. Finally, we place vertices $l_4$ and $r_5$ arbitrarily close to $t$. The positive routing path from $t$ to $s$ would route to $r_5$, $r_4$, $r_3$, $r_2$, $r_1$, and finally $s$. 

\begin{figure}[ht]
  \begin{center}
    \includegraphics{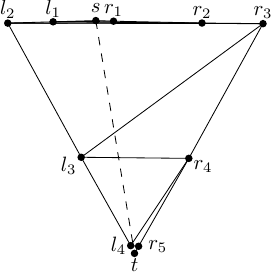}
  \end{center}
  \caption{The situation where the negative routing algorithm uses $2 \sqrt{39} \cdot |s t|$ to reach $t$.}
  \label{fig:SimplifiedWorst}
\end{figure}

The negative routing algorithm on the other hand would try both sides, going back through $s$ each time it switches sides: go to $r_1$, go to $l_1$, go to $r_2$, go to $l_3$ (via $l_2$), go to $r_5$ (via $r_3$), go to $l_4$ (via $l_2$), and finally go to $t$ (via $s$ and $r_5$). We can pick the edge lengths between the vertices in such a way that each time the next vertex along one of the two sides is reached (other than $l_4$), the negative routing algorithm runs out of budget and needs to backtrack to try the other side. The total length traveled this way is the sum of: 
\begin{itemize}
  \item $2 \cdot \delta(s, r_5)$, for going back and forth from $s$ until the step before $r_5$ is reached for the first time, 
  \item $2 \cdot \delta(s, r_5)$, for going to $r_5$ and back to $s$ when the negative routing algorithm almost reaches $t$, 
  \item $2 \cdot \delta(s, l_4)$, for going down the wrong path (and back up) after reaching $r_5$, 
  \item $\delta(s, t)$, for finally reaching $t$, 
\end{itemize}
where $\delta(x, y)$ is the distance along the negative routing path between $x$ and $y$. Since $r_5$ can be arbitrarily close to $t$, this sums up to $5 \cdot \delta(s, t) + 2 \cdot \delta(s, l_4)$. 

Let $\alpha$ be the angle between the bisector of \canon{t}{s} and $t s$. Using the law of sines, we can express $\delta(s, t)$ and $\delta(s, l_4)$ as follows: 
\begin{eqnarray*}
  \delta(s, t) &=& |s r_3| + |r_3 t|\\
  &=& \left( \frac{\sin\left( \frac{\pi}{6} + \alpha \right)}{\sin \left( \frac{\pi}{3} \right)} + \frac{\sin\left( \frac{\pi}{2} - \alpha \right)}{\sin \left( \frac{\pi}{3} \right)} \right) \cdot |s t| \\
  &=& \left( \sqrt{3} \cdot \cos \alpha + \sin \alpha \right) \cdot |s t| \\
  && \\
  \delta(s, l_4) &=& |s l_2| + |l_2 l_4|\\
  &=& \left( \frac{\sin\left( \frac{\pi}{6} - \alpha \right)}{\sin \left( \frac{\pi}{3} \right)} + \frac{\sin\left( \frac{\pi}{2} - \alpha \right)}{\sin \left( \frac{\pi}{3} \right)} \right) \cdot |s t| \\
  &=& \left( \sqrt{3} \cdot \cos \alpha - \sin \alpha \right) \cdot |s t| 
\end{eqnarray*}
Thus, the total distance traveled by the negative routing algorithm becomes: 
\begin{eqnarray*}
  & & 5 \cdot \delta(s, t) + 2 \cdot \delta(s, l_4) \\
  &=& 5 \cdot \left( \sqrt{3} \cdot \cos \alpha + \sin \alpha \right) \cdot |s t| + 2 \cdot \left( \sqrt{3} \cdot \cos \alpha - \sin \alpha \right) \cdot |s t| \\
  &=& \left( 7 \sqrt{3} \cdot \cos \alpha + 3 \sin \alpha \right) \cdot |s t|
\end{eqnarray*}
When maximizing this function over $\alpha$, with $0 \leq \alpha \leq \pi/6$, we find the maximum at $\alpha \approx 0.2425$, where the function has value $2 \sqrt{39} \cdot |s t| \approx 12.48 \cdot |s t|$.

\section{Conclusion}
We showed that no deterministic 1-local routing algorithm is $o(\sqrt{n})$-competitive on all pairs of vertices of the constrained $\Theta_6$-graph, regardless of the amount of memory it is allowed to use. Following this negative result, we showed how to route between any two \emph{visible} vertices of the constrained $\Theta_6$-graph using only 1-local information by routing on one of the two constrained \graph{s}. This routing algorithm guarantees that the returned path has length at most 2 times the Euclidean distance between the source and destination. Additionally, we provided a 1-local 18-competitive routing algorithm for visible vertices in the constrained half-$\Theta_6$-graph. To the best of our knowledge, this is the first 1-local routing algorithm in the constrained setting with guarantees on the length of the returned path. 

There remain a number of open problems in the area of local competitive routing in the constrained setting. For example, though we showed that no deterministic 1-local routing algorithm is $o(\sqrt{n})$-competitive on all pairs of vertices of the $\Theta_6$-graph, it would still be interesting to construct a routing algorithm that reaches any vertex. 

Furthermore, we showed how to route on a specific constrained $\Theta$-graph. It would be very nice if there exists a local routing algorithm that is competitive on all constrained $\Theta$-graphs. In the unconstrained setting, the $\Theta$-routing algorithm (which repeatedly follows the edge to the closest vertex in the cone that contains the destination) is such an algorithm, provided that at least 7 cones are being used. In the constrained setting, however, this particular algorithm need not reach the destination, since even if the source can see the destination, this does not necessarily hold for every vertex along the path. Because of this, there need not be any edge in the cone that contains the destination, meaning that this $\Theta$-routing algorithm can get stuck. 

Finally, constrained $\Theta$-graphs are not the only graphs that are known to be spanners in the constrained setting. For example, constrained Yao-graphs and constrained (generalized) Delaunay graphs have also been shown to be spanners~\cite{R2014,R2014Thesis}. As was the case for constrained $\Theta$-graphs prior to our work, no routing algorithms are known to exist for those graphs.

\bibliographystyle{plain}
\bibliography{route}

\end{document}